\documentclass[runningheads,11pt]{llncs}
\usepackage[T1]{fontenc}
\usepackage{mathtools}
\usepackage{tikz}
\usepackage{style}
\DeclarePairedDelimiter{\paren}{(}{)}

\newcommand{\Dist}[2][]{\mathscr{D}_{#1}\paren*{#2}}

\begin{document}

\title{New bounds on randomized metric distortion of top-\texorpdfstring{$k$}{k} voting}

\author{Alec Sun \inst{1} \and Daniel Zhu \inst{2}}
\institute{University of Chicago \and The Harker School}

\maketitle

\begin{abstract}
We prove new upper and lower bounds on metric distortion for randomized social choice mechanisms. Under first-choice voting where each voter reports only their most preferred candidate, we show that selecting a candidate with probability proportional to the $\frac{n}{n-1}$-th power of their vote share achieves the optimal worst-case distortion of $3 - \frac{2}{n}$. This is a simpler single-rule alternative to prior work. We also study instance-specific metric distortion of first-choice mechanisms in terms of the vote vector $\nu$. We show that there is a uniquely optimal rule achieving distortion $1 + \frac{2}{\sum_i \frac{\nu_i}{1 - \nu_i}}$. Finally, we extend our results to top-$k$ voting where each voter reports their $k$ nearest candidates. We derive a formula for the worst-case distortion for any $k\ge 2$. For the cyclic profile family this improves the previously best known $3 - \frac{2}{\lfloor \frac{n}{k} \rfloor}$ lower bound.

\keywords{Voting rules \and Metric distortion \and Social choice}
\end{abstract}

\section{Introduction}

Social choice theory studies the aggregation of individual preferences into collective decisions. The \emph{distortion} framework, introduced by \cite{procaccia2006}, evaluates mechanisms by the worst-case ratio of their achieved social cost to the optimum, assuming agents have latent cardinal utilities, of which only ordinal rankings are revealed. This perspective, broadened by \cite{boutilier2015}, treats voting rules as approximation algorithms for social welfare. Under the \emph{metric distortion} model of \cite{anshelevich2018}, agents and candidates are embedded in a metric space, so each agent's cost equals their distance to the elected candidate.

While mechanisms could in principle solicit full ordinal rankings, demanding a complete permutation of all $n$ candidates imposes a large cognitive and communication burden. The \emph{first-choice voting} setting---where each voter reports only their top choice---aligns with widely deployed plurality voting and dramatically reduces this overhead. \cite{gross2017} showed the worst-case distortion for randomized first-choice mechanisms is at least $3 - \frac{2}{n}$, and \cite{kempe2019} constructed a rule that achieves this bound.

\paragraph{Our contributions.}
Although \cite{kempe2019} achieved optimal first-choice distortion $3 - \frac{2}{n}$, they used a randomized combination of two rules; in this paper we provide a simpler rule achieving the same distortion. Moreover, classic distortion bounds evaluate the worst case uniformly without considering the specific vote vector $\nu$, leaving open the instance-specific question posed by \cite{anshelevich2021}: \emph{What is the best possible distortion for a given $\nu$?} We answer this question in our paper as well. Finally, we generalize our results to \emph{top-$k$ voting} where each voter reports an ordered list of their $k$ nearest candidates. To summarize, our main contributions are:
\begin{itemize}
    \item \tbf{An exact instance-specific formula} (\cref{cor:exact}): the \emph{exact} distortion of \emph{any} first-choice mechanism, obtained by matching our new lower bound (\cref{thm:lower-bound}) with a known upper bound.
    \item \tbf{A simple proportional-to-powers mechanism} (\cref{thm:n-opt}): the unique single $q^p$ rule achieving optimal metric distortion $3 - \frac{2}{n}$.
    \item \tbf{The uniquely $\nu$-optimal rule} (\cref{thm:nu-opt}): proportional-to-$\frac{\nu_i}{1-\nu_i}$, which strictly dominates every other randomized first-choice mechanism for every $\nu$.
    \item \tbf{An exact top-$k$ distortion formula} (\cref{thm:k-exact}): for every $k$, the exact worst-case distortion of any mechanism, with a matching tight construction. For the cyclic profile family this gives a closed-form lower bound (\cref{prop:cyclic}) that improves the $3 - \fr{2}{\lfloor \fr{n}{k} \rfloor}$ bound of \cite{gross2017}.
\end{itemize}

\subsection{Related work}

\paragraph{Utilitarian social choice and distortion.}
\cite{procaccia2006} introduced distortion to quantify the information loss when mechanisms rely on ordinal preferences rather than cardinal utilities. \cite{boutilier2015} broadened this to analyze optimal social choice functions under worst- and average-case models, treating voting rules as approximation algorithms for social welfare. \cite{caragiannis2017} extended the framework to subset selection.

\paragraph{Metric distortion.}
\cite{anshelevich2018} pioneered the metric distortion model, bounding several well-known deterministic rules and showing Copeland's achieves a constant. \cite{gkatzelis2020} resolved the optimal deterministic metric distortion conjecture, achieving the lower bound of $3$.

\paragraph{Randomized mechanisms and first-choice voting.}
\cite{gross2017} introduced the ``2-Agree'' mechanism and proved the $3 - \frac{2}{n}$ lower bound for randomized first-choice mechanisms; they also showed a $3 - \fr{2}{\lfloor \fr{n}{k} \rfloor}$ lower bound for top-$k$ voting. \cite{kempe2019} gave a randomized mechanism that exactly achieves $3 - \frac{2}{n}$ using only first-choice votes.

\section{Preliminaries}

There are $n$ candidates and an arbitrary number of voters, all embedded in a metric space $(X, d)$. From the metric, we define the notion of social cost:

\begin{definition}[Social cost]
If voter $v$ associates a cost of $d\paren*{v, i}$ with candidate $i$, the cost of candidate $i$ is $C_d\paren*{i} = \sum_{v} d\paren*{v, i}$, which represents the sum of costs associated with candidate $i$ across all voters.
\end{definition}

In first-choice voting, only each voter's nearest candidate is revealed. We represent this as a vote vector $\nu$, where $\nu_i$ is the fraction of voters whose closest candidate is $i$.

\begin{definition}[Randomized first-choice mechanism]
A \emph{randomized first-choice mechanism} takes $\nu$ as input and outputs a distribution $q\paren*{\nu}$ over candidates, where $q\paren*{\nu}_i$ is the probability of electing candidate $i$.
\end{definition}

We use the following notions of consistency and distortion from \cite{kempe2019}.

\begin{definition}[Metric consistency and distortion]
We say the vote vector $\nu$ is \textit{consistent} with a metric $d$, and write $d \sim \nu$, if for all $i$, the fraction of voters in $d$ whose closest candidate is $i$ is precisely $\nu_i$. In other words, $d \sim \nu$ if the preferences induced by $d$ are exactly $\nu$.

The \emph{distortion} of a randomized first-choice mechanism $q$ is the worst-case ratio of the mechanism's expected cost to the optimal cost over all distance assignments consistent with $\nu$:
\begin{align*}
\Dist{q} &= \sup_{d\sim\nu} \frac{\bE_{q\paren*{\nu}}[C_d\paren*{i}]}{\min_i C_d\paren*{i}} \\
&= \sup_{d\sim\nu} \frac{\sum_{i} C_d\paren*{i} \cdot q\paren*{\nu}_i}{\min_i C_d\paren*{i}}
\end{align*}
\end{definition}

\begin{definition}[Proportional-to-$f\paren*{\nu}$ mechanisms]
A mechanism $q$ is \emph{proportional-to-$f\paren*{\nu}$} if $q\paren*{\nu}_i = f\paren*{\nu_i}/\sum_{j} f\paren*{\nu_j}$. We write $q^p$ for the proportional-to-$\nu^p$ mechanism.
\end{definition}

\begin{definition}[$n$-optimality vs.\ $\nu$-optimality]
An \emph{$n$-optimal} mechanism achieves worst-case distortion $3-\frac{2}{n}$ across all profiles. A \emph{$\nu$-optimal} mechanism achieves the tightest possible distortion for every specific $\nu$.
\end{definition}

\section{Analysis of first-choice mechanisms}

To upper bound distortions, we use \cite[Lemma~3]{gross2017}, restated in our notation:

\begin{lemma}
\label{lem:upper-bound}
Let $\nu_i$ denote the fractions of voters
ranking candidate $i$ first, for all $i$.
For these values of $\mathbf{\nu}$, let the probability that a mechanism $q$ elects candidate $i$ be $q_i$.
Then, the distortion $\Dist{q}$ of $q$ is at most
$1 + 2 \max_{i} \paren*{q_{i} \cdot \frac{1-\nu_i}{\nu_i}}$. 
\end{lemma}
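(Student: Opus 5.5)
Fix a metric $d\sim\nu$ and let $j$ be an optimal candidate, so $\min_i C_d(i)=C_d(j)$. Write $V_i$ for the set of voters whose top choice is $i$, so that $|V_i|=\nu_i N$ where $N$ is the number of voters. The plan is to bound $C_d(i)$ for each $i\neq j$ in terms of $C_d(j)$, with an extra term that decays in $\nu_i$. We then average these bounds against $q$. Normalize $C_d(j)=1$ and write $c_i=\sum_{v\in V_i} d(v,j)$, so that $\sum_i c_i = 1$.

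The first step is a triangle-inequality bound on $d(i,j)$. For every voter $v\in V_i$ we have $d(v,i)\le d(v,j)$, so $d(i,j)\le d(v,i)+d(v,j)\le 2d(v,j)$. Averaging over $V_i$ gives
\[
d(i,j)\le \frac{2c_i}{\nu_i N}.
\]
Next, for any voter $v$, $d(v,i)\le d(v,j)+d(i,j)$. Summing over all voters outside $V_i$ (the bound is loose there), and using $d(v,i)\le d(v,j)$ for voters inside $V_i$, we obtain
\[
C_d(i)\le 1 + (1-\nu_i)N\, d(i,j) \le 1+ 2c_i\frac{1-\nu_i}{\nu_i}.
\]
This estimate holds for $i=j$ as well: there the left side equals $1$, the right side is at least $1$, and the term is harmless.

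The second step is to average these bounds with weights $q_i$:
\[
\sum_i q_i C_d(i) \le 1 + 2\sum_i c_i \left(q_i\frac{1-\nu_i}{\nu_i}\right) \le 1 + 2\max_i\left(q_i\frac{1-\nu_i}{\nu_i}\right)\sum_i c_i .
\]
Since $\sum_i c_i = 1$ under our normalization, the right side is the claimed bound. Taking the supremum over $d\sim\nu$ then finishes the proof.

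The only delicate point is the inequality $d(i,j)\le 2c_i/(\nu_i N)$, which requires $\nu_i>0$. If $\nu_i=0$, the factor $(1-\nu_i)/\nu_i$ is infinite. Any mechanism with bounded distortion must then set $q_i=0$, and we adopt the convention $0\cdot\infty=0$ for such $i$. With that convention, candidates with no votes contribute nothing to the expectation and can simply be dropped from the sum.
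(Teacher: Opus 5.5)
Your proof is correct and follows essentially the same route as the paper's: it uses the same two triangle-inequality estimates, $C_d(i)-C_d(j)\le(1-\nu_i)N\,d(i,j)$ and $c_i\ge \tfrac12\nu_i N\,d(i,j)$, and combines them with a convex-combination step (weights $c_i$ summing to $1$) that is the mediant inequality in another form. Using the actual group costs $c_i$ as weights, rather than dividing by a summed lower bound, has the small advantage of cleanly handling the degenerate terms ($i=j$, or $\nu_i=0$) that the paper's mediant step passes over.
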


For completeness, we reproduce the proof of \cref{lem:upper-bound} in \cref{a}. We will now show that this bound is tight:

\begin{theorem}\label{thm:lower-bound} For a given $\nu$, the distortion of any mechanism $q$ is at least
$
1 + 2 \cdot \max_{i} \paren*{q_i \cdot \frac{1 - \nu_i}{\nu_i} }
$.
\end{theorem}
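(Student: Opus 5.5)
The plan is to construct, for each candidate $i$, a family of metrics $d_\varepsilon \sim \nu$. In these metrics candidate $i$ costs about $1 + 2\frac{1-\nu_i}{\nu_i}$ times the optimum, and every other candidate is (nearly) optimal. Taking $i$ to be the maximizer in the statement and letting $\varepsilon \to 0$ then gives the bound, because the mechanism's output $q = q(\nu)$ depends only on $\nu$ and not on $d$. Writing $\mathrm{OPT} = \min_j C_d(j)$, we have $C_d(j) \ge \mathrm{OPT}$ for every $j$. Hence
\[
\frac{\sum_j q_j C_d(j)}{\mathrm{OPT}} \;\ge\; q_i \frac{C_d(i)}{\mathrm{OPT}} + (1-q_i),
\]
and it suffices to make $C_d(i)/\mathrm{OPT}$ approach $(2-\nu_i)/\nu_i = 1 + 2\frac{1-\nu_i}{\nu_i}$.

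\textbf{Construction.} Fix $i$ and a small $\varepsilon > 0$.
\begin{itemize}
\item Place every candidate $j \neq i$ at its own point, with pairwise distances $d(j,j') = \varepsilon$.
\item Put each voter whose first choice is $j \neq i$ exactly at the location of $j$.
\item Put candidate $i$ at distance $2$ from every $j \neq i$.
\item Put all voters of $i$ at a single point $m$ with $d(m,i) = 1$ and $d(m,j) = 1+\varepsilon$ for all $j \neq i$.
\end{itemize}

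Next I check that this is a metric. Every triangle involving these points satisfies the triangle inequality. For example, $d(i,j) = 2 \le d(i,m) + d(m,j)$, and $d(m,j') \le d(m,j) + d(j,j')$. If one insists on distinct voter locations, they can be perturbed slightly without changing anything.

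I also check consistency, $d_\varepsilon \sim \nu$. Voters at $j$ are strictly closest to $j$, since their distances are $0$ to $j$, $\varepsilon$ to other $j'$, and $2$ to $i$. Voters at $m$ are strictly closest to $i$, since $1 < 1+\varepsilon$. This strictness is the reason for the $\varepsilon$ perturbations, so there are no tie-breaking issues.

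\textbf{Costs.} Normalize the total number of voters to $1$. Then
\[
C(i) = \nu_i + 2(1-\nu_i) = 2 - \nu_i .
\]
For any $j \neq i$,
\[
C(j) \le \nu_i(1+\varepsilon) + (1-\nu_i)\varepsilon ,
\]
so $\mathrm{OPT} \le \nu_i + \varepsilon$. Hence $C(i)/\mathrm{OPT} \ge (2-\nu_i)/(\nu_i+\varepsilon)$. Letting $\varepsilon \to 0$, the supremum defining $\Dist{q}$ is at least
\[
q_i \frac{2-\nu_i}{\nu_i} + 1 - q_i = 1 + 2 q_i \frac{1-\nu_i}{\nu_i}.
\]
Maximizing over $i$ gives the theorem.

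\textbf{Degenerate cases.} If $\nu_i = 0$, the point $m$ carries no voters. Then $\mathrm{OPT} \le \varepsilon$ while $C(i) = 2$, so the distortion is unbounded whenever $q_i > 0$. This matches the formula under the convention $q_i \cdot \frac{1}{0} = \infty$ for $q_i > 0$. If $\nu_i = 1$, the bound reduces to $1$, which is trivial.

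The main thing to get right is this single construction: it must simultaneously satisfy the triangle inequality and strict first-choice consistency, while making all $j \neq i$ nearly optimal. The $\varepsilon$-perturbation above handles both requirements. The remaining steps are routine.
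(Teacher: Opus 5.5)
Your proposal is correct and is essentially the paper's construction scaled by a factor of $2$. The paper collapses every other candidate and their supporters onto one point at distance $1$ from $i^*$ and places $i^*$'s supporters at the midpoint, which gives the bound exactly; your $\varepsilon$-perturbations (spreading the other candidates apart and nudging $m$ toward $i$) make every voter's first choice strict, removing the ties the paper's version implicitly resolves in favor of the reported votes, at the small cost of a limit $\varepsilon \to 0$ inside the supremum.
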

\begin{proof} Let the value of $i$ that maximizes the above expression be $i^*$, and select another candidate $j$ such that $j \neq i^*$. Then, we place all candidates and voters, aside from candidate $i^*$ and the $\nu_{i^*}$ voters who favor $i^*$, at the same location. We then place candidate $i^*$ such that $d\paren*{j, i^*} = 1$, and the $\nu_{i^*}$ voters at a position $k$ such that $d\paren*{j, k} = d\paren*{k, i^*} = \frac{1}{2}$, as illustrated in \cref{fig:hard-instance}.

\begin{figure}[t] 
\centering
\begin{tikzpicture}[>=stealth]
  \def\u{4} 

  \draw[gray!50] (-0.7, 0) -- (\u + 0.7, 0);

  \filldraw[black]          (0,    0) circle (2.5pt);
  \filldraw[black]          (\u/2, 0) circle (2.5pt);
  \filldraw[red!70!black]   (\u,   0) circle (3pt);

  \node[below=6pt, font=\small]                    at (0,    0) {$j$};
  \node[below=6pt, font=\small]                    at (\u/2, 0) {$k$};
  \node[below=6pt, font=\small, red!70!black]      at (\u,   0) {$i^*$};

  \node[above=8pt, align=center, font=\small] at (0, 0)
    {all other candidates \\ $\paren*{1{-}\nu_{i^*}}$ voters};
  \node[above=8pt, align=center, font=\small] at (\u/2, 0)
    {$\nu_{i^*}$ voters};
  \node[above=8pt, align=center, font=\small, red!70!black] at (\u, 0)
    {candidate $i^*$};

  \def\d{-0.7}
  \draw[<->] (0, \d) -- (\u/2, \d)
    node[midway, fill=white, inner sep=1pt, font=\small] {$\frac{1}{2}$};
  \draw[<->] (\u/2, \d) -- (\u, \d)
    node[midway, fill=white, inner sep=1pt, font=\small] {$\frac{1}{2}$};
  \draw[<->] (0, \d-0.5) -- (\u, \d-0.5)
    node[midway, fill=white, inner sep=1pt, font=\small] {$d(j,i^*)=1$};
\end{tikzpicture}
\caption{Metric instance for \cref{thm:lower-bound}. All entities except $i^*$ and its $\nu_{i^*}$ supporters are collapsed to $j$; the supporters sit at the midpoint $k$.}
\label{fig:hard-instance}
\end{figure}

With this setup, candidate $i^*$ incurs a total cost of $C\paren*{i^*} = \paren*{1 - \nu_{i^*}} + \frac{1}{2} \nu_{i^*}$, while all other candidates incur a total cost of $C\paren*{j} = \frac{1}{2} \nu_{i^*}$. Therefore, the expected distortion is
\begin{align*}
\Dist{q} &= 1 + q_{i^*} \cdot \frac{C \paren*{i^*} - C \paren*{j}}{C \paren*{j}} \\
&= 1 + 2 q_{i^*} \cdot \frac{1 - \nu_{i^*}}{\nu_{i^*}}
\end{align*}
as desired.
\qed
\end{proof}

Since \cref{lem:upper-bound} and \cref{thm:lower-bound} provide matching upper and lower bounds, the distortion of every mechanism is exactly determined:

\begin{corollary} \label{cor:exact}
For any mechanism $q$ and any vote vector $\nu$, the distortion of $q$ is exactly
\[
\Dist{q} = 1 + 2 \max_{i} \paren*{q_{i} \cdot \frac{1 - \nu_i}{\nu_i}}.
\]
\end{corollary}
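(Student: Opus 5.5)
The plan is to sandwich $\Dist{q}$ between two bounds that already appear above. For a fixed vote vector $\nu$ and mechanism $q$, write $M = \max_i q_i \cdot \frac{1-\nu_i}{\nu_i}$. \Cref{lem:upper-bound} gives $\Dist{q} \le 1 + 2M$. \Cref{thm:lower-bound} gives $\Dist{q} \ge 1 + 2M$. Together these force equality.

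Before combining, I would check that both results speak about the same quantity. They must apply to the same $\nu$, and the supremum in $\Dist{q}$ must range over metrics $d \sim \nu$. For the lower bound, this means verifying that the instance of \cref{thm:lower-bound} is consistent with $\nu$.

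Each voter placed at $j$'s location is at distance $0$ from every candidate collapsed there. The remaining votes can therefore be split among those candidates in exactly the proportions $\nu_i$ for $i \ne i^*$, treating ties as broken consistently with the reported vote. The $\nu_{i^*}$ voters at $k$ are at distance $\frac12$ from both $i^*$ and $j$. They are also tied, so we may declare $i^*$ as their top choice.

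The main obstacle is this tie issue. If consistency requires strict preferences, I would perturb the construction. Move the supporters of $i^*$ to distance $\frac12 - \varepsilon$ from $i^*$ and $\frac12 + \varepsilon$ from $j$. Separate the other candidates by tiny distances $\delta$, each with its own supporters on top of it. The resulting ratio is continuous in $(\varepsilon, \delta)$, so taking the supremum recovers $1 + 2 q_{i^*}\frac{1-\nu_{i^*}}{\nu_{i^*}}$ in the limit.

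Two degenerate cases need a remark. If $\nu_i = 0$ for some $i$, the expression is $+\infty$ whenever $q_i > 0$, and both bounds read it that way. If $n = 1$, the statement is trivial.

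With both inequalities in place, $\Dist{q} = 1 + 2\max_i \left(q_i \cdot \frac{1-\nu_i}{\nu_i}\right)$ follows immediately.
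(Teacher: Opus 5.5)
Your proposal is correct and follows the paper's own argument: the corollary is obtained by sandwiching the distortion between the upper bound of \cref{lem:upper-bound} and the matching lower bound of \cref{thm:lower-bound}. Your additional checks are sound and fill in details the paper leaves implicit: that the tied hard instance is consistent with $\nu$, that the $(\varepsilon,\delta)$ perturbation recovers the value as a supremum under strict preferences, and the degenerate cases $\nu_i = 0$ and $n = 1$.
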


This characterization drives the following results. \cite{kempe2019} showed that a $(1/(n-1))$-weighted mixture of $q^2$ and $q^1$ (random dictatorship) is $n$-optimal. We prove in \cref{a} that a single $q^p$ suffices:

\begin{theorem} \label{thm:n-opt}
Among all possible $q^p$, $q^\frac{n}{n - 1}$ is the \emph{only} $n$-optimal mechanism.
\end{theorem}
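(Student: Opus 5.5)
By \cref{cor:exact}, the worst-case distortion of $q^p$ over all profiles is
\[
1 + 2 \sup_{\nu}\, \max_i \frac{\nu_i^{p-1}(1-\nu_i)}{\sum_j \nu_j^p}.
\]
So the theorem reduces to showing that
\[
F_p \coloneqq \sup_\nu \max_i \frac{\nu_i^{p-1}(1-\nu_i)}{\sum_j \nu_j^p}
\]
equals $1-\frac1n$ exactly when $p=\frac{n}{n-1}$, and is strictly larger otherwise. (It can never be smaller, by the $3-\frac2n$ lower bound of \cite{gross2017}.)

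\textbf{Reducing the inner maximization.} Fix $i$ and write $x=\nu_i$. For the other coordinates, which sum to $1-x$ and number $n-1$, I must minimize $\sum_{j\ne i}\nu_j^p$.
\begin{itemize}
\item For $p>1$, convexity (power mean) puts the minimum at the uniform split, giving $(n-1)\left(\frac{1-x}{n-1}\right)^p$.
\item For $p\le 1$ the sum is minimized at a vertex, e.g.\ $\nu_j=1-x$ for a single $j$. Then $g(x)=\frac{x^{p-1}(1-x)}{x^p+(1-x)^p}\to\infty$ as $x\to0$ when $p<1$. When $p=1$ (random dictatorship) the value is $1-x\to1>1-\frac1n$.
\end{itemize}
So $p\le1$ fails, and I assume $p>1$. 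The one-variable problem is then
\[
h(x)=\frac{x^{p-1}(1-x)}{x^p+(n-1)^{1-p}(1-x)^p},\qquad x\in(0,1].
\]
Substituting $t=\frac{1-x}{x}$ and $c=(n-1)^{1-p}$ gives
\[
h=\frac{t}{1+c\,t^p}.
\]

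\textbf{Optimizing $h$ explicitly.} Setting the derivative to zero gives $1+ct^p=pct^p$, so $t^p=\frac{1}{c(p-1)}$, and the maximum value is
\[
M(p)=\frac{p-1}{p}\,t^*=\frac{p-1}{p}\bigl(c(p-1)\bigr)^{-1/p}=\frac{(p-1)^{1-1/p}}{p}(n-1)^{(p-1)/p}.
\]
I will verify that $t^*$ corresponds to a feasible $x\in(0,1)$, which is automatic since $t\in(0,\infty)$. I also check the boundary: $h\to0$ as $t\to0$ and as $t\to\infty$ because $p>1$, so this critical point is the global maximum. At $p=\frac{n}{n-1}$ we have $p-1=\frac1{n-1}$ and $\frac{p-1}{p}=\frac1n$, which give $(p-1)^{1-1/p}(n-1)^{1-1/p}=1$ and hence $M=\frac1p=\frac{n-1}{n}$. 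The distortion is therefore $1+2\cdot\frac{n-1}{n}=3-\frac2n$.

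\textbf{Uniqueness.} I must show $M(p)>1-\frac1n$ for every $p>1$ with $p\ne\frac{n}{n-1}$. I expect this to be the main obstacle. Take logarithms:
\[
\phi(p)=\Bigl(1-\tfrac1p\Bigr)\ln\bigl((p-1)(n-1)\bigr)-\ln p.
\]
Substitute $s=1-\frac1p\in(0,1)$, so $p=\frac1{1-s}$ and $p-1=\frac{s}{1-s}$. This gives
\[
\phi=s\ln s+(1-s)\ln(1-s)+s\ln(n-1),
\]
which is strictly convex in $s$. Its derivative $\ln\frac{s}{1-s}+\ln(n-1)$ vanishes exactly at $s=\frac1n$, i.e.\ $p=\frac{n}{n-1}$. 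Hence the minimum of $M$ is attained uniquely there, with value $\frac{n-1}{n}$. Every other $p$ yields a distortion strictly exceeding $3-\frac2n$, on the instance with $\nu_i=\frac{1}{1+t^*}$ and the remaining mass split uniformly. This completes the plan.
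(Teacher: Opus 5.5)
Your argument is correct, but it takes a genuinely different route from the paper's. The paper never evaluates $\sup_\nu \Dist{q^p}$ for general $p$. It gets necessity from a first-order perturbation of the uniform profile ($\nu_1 = \frac1n(1-(n-1)\epsilon)$, all other $\nu_j = \frac1n(1+\epsilon)$). The linear term $\frac2n\bigl((n-1)p-n\bigr)\epsilon$ of that perturbation can be made positive unless $p = \frac{n}{n-1}$. It gets sufficiency by decoupling numerator and denominator at $p=\frac{n}{n-1}$: AM--GM shows $\nu_i^{1/(n-1)}(1-\nu_i)$ is maximized at $\nu_i=\frac1n$, and convexity shows $\sum_j \nu_j^{n/(n-1)}$ is minimized at the uniform profile, so the two separate bounds are tight simultaneously.

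You instead solve the extremal problem exactly. The convexity reduction to one free coordinate, the substitution $t=\frac{1-x}{x}$, and the substitution $s = 1-\frac1p$ give the closed form $\sup_\nu \Dist{q^p} = 1+2M(p)$ for every $p>1$. Strict convexity of $\phi$ in $s$ then gives existence and uniqueness at once.

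Your route yields quantitative information the paper's does not: the exact worst-case distortion of every $q^p$ with $p>1$, and its extremal profile. That profile is uniform exactly when $t^\ast = n-1$, i.e.\ $p=\frac{n}{n-1}$. The paper's route is shorter, and its perturbation argument covers all real $p$ with no case split.

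On that last point, one small inaccuracy in your $p\le 1$ case. For $p<0$ (and degenerately $p=0$), the sum $\sum_{j\neq i}\nu_j^p$ is not minimized at a vertex; it blows up there. However, you only need a lower bound on $F_p$, and the uniform split already gives $h = t/(1+ct^p)\to\infty$ as $t\to\infty$ for every $p<1$. So your conclusion stands.
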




\begin{remark}
The $q^{\frac{n}{n-1}}$ rule has a natural interpretation: as the candidate field grows ($n \to \infty$), the exponent approaches $1$ and the rule collapses to random dictatorship $q^1$, mirroring the \cite{kempe2019} mechanism that interpolates between $q^2$ and $q^1$.
\end{remark}

Next, we construct a mechanism whose optimality and uniqueness both follow directly from the exact distortion of \cref{cor:exact}:

\begin{theorem} \label{thm:nu-opt}
For all $n$, the proportional-to-$\frac{\nu_i}{1 - \nu_i}$ mechanism is the \emph{only} $\nu$-optimal mechanism.
\end{theorem}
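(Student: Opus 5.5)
By \cref{cor:exact}, for a fixed $\nu$ the distortion of a mechanism depends only on the vector $q=q(\nu)$, through the quantity $M(q)=\max_i q_i w_i$ where $w_i=\frac{1-\nu_i}{\nu_i}$. The plan is to minimize $M(q)$ over the probability simplex and to show that the minimizer is unique and equals the proportional-to-$\frac{\nu_i}{1-\nu_i}$ distribution.

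Throughout I restrict to candidates with $0<\nu_i<1$. A candidate with $\nu_i=0$ has $w_i=\infty$, so any positive $q_i$ makes the distortion infinite. The proportional rule assigns such a candidate probability $0$ automatically, and I will note this convention explicitly. The degenerate case $\nu_i=1$ for a single $i$ is trivial: that candidate alone must receive probability one.

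\textbf{Lower bound.} For any distribution $q$ and each $i$ we have $q_i \le M(q)/w_i = M(q)\,\frac{\nu_i}{1-\nu_i}$. Summing over $i$ gives $1 \le M(q)\, S$, where $S=\sum_j \frac{\nu_j}{1-\nu_j}$. Hence $M(q)\ge 1/S$, and every mechanism has distortion at least $1+\frac{2}{S}$.

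\textbf{Achievability.} Take $q^*_i = \frac{\nu_i/(1-\nu_i)}{S}$. Then $q^*_i w_i = 1/S$ for every $i$, so $M(q^*)=1/S$. Therefore $q^*$ attains the bound for every $\nu$, which proves $\nu$-optimality.

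\textbf{Uniqueness.} Suppose $q$ attains $M(q)=1/S$. Equality in the summed inequality forces equality in every term, so $q_i = \frac{\nu_i}{(1-\nu_i)S}=q^*_i$ for all $i$. Consequently any $\nu$-optimal mechanism must coincide with $q^*$ at every $\nu$.

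The only delicate point is the boundary bookkeeping for $\nu_i\in\{0,1\}$. The algebra itself is a one-line averaging argument.
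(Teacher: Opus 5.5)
Your proof is correct and is essentially the paper's argument: summing $q_i \le M(q)\,\frac{\nu_i}{1-\nu_i}$ over $i$ is exactly the mediant inequality the paper invokes after \cref{cor:exact}, and your equality-forces-termwise-equality step is its uniqueness argument. Your explicit handling of the boundary cases $\nu_i \in \{0,1\}$ is a small addition that the paper leaves implicit.
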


\begin{proof}
By \cref{cor:exact}, every mechanism $q$ satisfies
\[
\Dist{q} = 1 + 2 \max_i \paren*{q_i \cdot \frac{1 - \nu_i}{\nu_i}},
\]
so a $\nu$-optimal mechanism is precisely a minimizer of $\max_i \paren*{q_i \cdot \frac{1 - \nu_i}{\nu_i}}$ over the simplex $\sum_i q_i = 1$. By the mediant inequality, every such $q$ satisfies
\[
\max_i \paren*{q_i \cdot \frac{1 - \nu_i}{\nu_i}} = \max_i \frac{q_i}{\frac{\nu_i}{1 - \nu_i}} \geq \frac{\sum_i q_i}{\sum_i \frac{\nu_i}{1 - \nu_i}} = \frac{1}{\sum_i \frac{\nu_i}{1 - \nu_i}},
\]
with equality if and only if all of the terms $q_i \cdot \frac{1 - \nu_i}{\nu_i}$ are equal, i.e.\ $q_i \propto \frac{\nu_i}{1 - \nu_i}$. Hence the unique minimizer is the proportional-to-$\frac{\nu}{1 - \nu}$ mechanism $q^*$, and it attains
\[
\Dist{q^*} = 1 + \frac{2}{\sum_{i} \frac{\nu_i}{1 - \nu_i}}.
\]
Since $q^*$ is the unique minimizer of the distortion, it is the \textit{only} $\nu$-optimal mechanism. \qed
\end{proof}
\begin{corollary} \label{cor:lower-bound}
    Given $\nu$, the best possible metric distortion for any randomized first-choice mechanism is $1 + \frac{2}{\sum_{i} \frac{\nu_i}{1 - \nu_i}}$.
\end{corollary}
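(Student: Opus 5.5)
The corollary is a direct consequence of \cref{cor:exact} together with the minimization carried out in the proof of \cref{thm:nu-opt}. The plan is to show the claimed quantity is both achieved and a lower bound, over all randomized first-choice mechanisms for the fixed vote vector $\nu$.

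\textbf{Lower bound.} Fix any mechanism $q$ and write $q_i=q(\nu)_i$. By \cref{cor:exact}, its distortion on $\nu$ is exactly $1+2\max_i q_i\frac{1-\nu_i}{\nu_i}$. The mediant inequality, applied as in the proof of \cref{thm:nu-opt}, gives
\[
\max_i \frac{q_i}{\nu_i/(1-\nu_i)} \ge \frac{\sum_i q_i}{\sum_i \nu_i/(1-\nu_i)} = \frac{1}{\sum_i \frac{\nu_i}{1-\nu_i}},
\]
using $\sum_i q_i=1$. Hence no mechanism has distortion below $1+\frac{2}{\sum_i \nu_i/(1-\nu_i)}$.

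\textbf{Achievability.} The proportional-to-$\frac{\nu_i}{1-\nu_i}$ mechanism makes every term $q_i\frac{1-\nu_i}{\nu_i}$ equal to $1/\sum_j\frac{\nu_j}{1-\nu_j}$. Its distortion therefore equals the bound, as already computed in \cref{thm:nu-opt}.

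\textbf{Degenerate cases.} The only step needing care is the boundary behaviour of $\nu$.
\begin{itemize}
\item If some $\nu_i=0$, I will treat the ratio $\frac{1-\nu_i}{\nu_i}$ as $+\infty$ whenever $q_i>0$. The optimum then puts zero mass on such candidates, and they contribute $0$ to the sum, so the formula remains valid after restricting to candidates with $\nu_i>0$.
\item If some $\nu_i=1$, the sum is infinite and the bound reads $1$. This is achieved by electing that candidate, who is optimal in the metric: every voter's nearest candidate is $i$, so $d(v,i)\le d(v,j)$ for every voter $v$ and candidate $j$.
\end{itemize}
I expect this boundary handling to be the only real obstacle. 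The two main steps amount to citing \cref{cor:exact} and \cref{thm:nu-opt}.
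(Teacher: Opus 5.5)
Your proposal is correct and follows the paper's own route: the lower bound comes from \cref{cor:exact} plus the mediant inequality over the simplex, and achievability comes from the proportional-to-$\frac{\nu_i}{1-\nu_i}$ mechanism, which is exactly how the paper obtains this corollary from \cref{thm:nu-opt}. Your treatment of the boundary cases $\nu_i \in \{0,1\}$ is a sound addition that the paper leaves implicit. Indeed, the construction in \cref{thm:lower-bound} with $\nu_{i^*}=0$ makes the optimal cost $0$, so the distortion is unbounded whenever a zero-vote candidate receives positive probability.
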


\begin{remark}
This distortion expression has two noteworthy special cases: it equals $3 - \fr{2}{n}$ at the uniform profile $\nu_i = \fr{1}{n}$ (the worst case), and approaches $1$ as any $\nu_i \to 1$ (unanimous consensus).
\end{remark}

\section{Generalization to top-\texorpdfstring{$k$}{k} voting}

We now extend our instance-specific analysis to \emph{top-$k$ voting}, in which each voter reports an ordered list of their $k$ nearest candidates. We fix a candidate of minimum social cost, call it $0$, and state all bounds relative to this candidate; the worst-case distortion is recovered by maximizing over all $n$ choices of optimal candidate at the end. For voter $v$, let $v\paren*{i}$ be the $0$-indexed position of candidate $i$ in $v$'s list, with $v\paren*{i} = k$ if $i$ is not ranked. We write $v\paren*{i} < v\paren*{j}$ to mean $v$ prefers $i$ to $j$; since the list is ordered by distance, $v\paren*{i} < v\paren*{j}$ implies $d\paren*{v, i} \leq d\paren*{v, j}$.

\paragraph{A motivating instance.}

Consider the cyclic $k = 2$ profile with three equally numerous voter types reporting $\paren*{0, 1}$, $\paren*{1, 2}$, and $\paren*{2, 0}$. By symmetry the optimal mechanism is $q = \paren*{\frac13, \frac13, \frac13}$, with distortion exactly $2$, see \cref{fig:cyclic}. However, the bound of \cref{lem:upper-bound} does not certify this value; we first see why, then repair it.

\begin{figure}[t]
\centering
\begin{tikzpicture}[>=stealth, scale=1.05,
    cand/.style={circle, fill=black, inner sep=2.3pt},
    vot/.style={rectangle, fill=black, inner sep=2.3pt},
    elab/.style={font=\small, fill=white, inner sep=1pt}]
  \coordinate (c1) at (0, 0);     
  \coordinate (c0) at (4, 0);     
  \coordinate (c2) at (4, 4);     
  \coordinate (vB) at (1.7, 1.55);
  \coordinate (vC) at (4, 2);     
  \draw (vB) to[bend left=14]  node[elab] {$1$} (c1);
  \draw (vB) to[bend right=14] node[elab] {$1$} (c2);
  \draw (vB) to[bend right=10] node[elab] {$1$} (c0);
  \draw (vC) -- node[elab] {$1$} (c2);
  \draw (vC) -- node[elab] {$1$} (c0);
  \node[cand] at (c1) {}; \node[cand] at (c0) {}; \node[cand] at (c2) {};
  \node[below left,  red!80!black, font=\bfseries] at (c1) {$1$};
  \node[below right, red!80!black, font=\bfseries] at (c0) {$0$};
  \node[above right, red!80!black, font=\bfseries] at (c2) {$2$};
  \node[vot] at (vB) {}; \node[vot] at (vC) {};
  \node[above=1pt, font=\small] at (vB) {$(1,2)$};
  \node[left=2pt, font=\small]  at (vC) {$(2,0)$};
  \node[below=9pt, font=\small] at (c0) {$(0,1)$};
\end{tikzpicture}
\caption{A metric witnessing distortion $2$ for $q = \paren*{\frac13, \frac13, \frac13}$ on the cyclic profile, drawn as a graph whose shortest-path lengths are the distances. The social costs are $C\paren*{0} = 2$, $C\paren*{1} = 6$, $C\paren*{2} = 4$.}
\label{fig:cyclic}
\end{figure}

\paragraph{A naive upper bound.}

The natural generalization of \cref{lem:upper-bound} lower-bounds $C_d\paren*{0}$ and upper-bounds each numerator term independently. On the cyclic instance this yields
\[
\Dist{q} \leq 1 + 2 \cdot \frac{q_1 \cdot 2 d\paren*{1, 0} + q_2 \cdot d\paren*{2, 0}}{\max\paren*{d\paren*{1, 0}, d\paren*{2, 0}} + d\paren*{2, 0}}.
\]
Setting $d\paren*{1, 0} = 1$, $d\paren*{2, 0} = 0$, $d\paren*{\paren*{0, 1}, 0} = 0$ drives this to $\frac{7}{3} > 2$, but those distances force voter $\paren*{0, 1}$ to rank candidates $\paren*{0, 2}$---inconsistent with the reported preferences. Thus, the bound is loose because numerator and denominator are maximized/minimized over metrics that cannot coexist.

\paragraph{Tightening the bound.}

We repair both estimates by exploiting the voter's reported order. For each voter $v$ we construct distances $d(v, i)$ satisfying: (1) the sequence $d(v,i)$ is non-decreasing in $v(i)$; (2) $d(v, 0)$ is minimized while all $d(v, i) \geq d(v, 0)$; (3) each triangle $(v, i, 0)$ satisfies the triangle inequality.

From property (3) and $d(v,i) \geq d(v,0)$, the candidate-to-candidate distance $d(i,0)$ must lie in the interval $D_i = [d(v, i) - d(v, 0), d(v, i) + d(v, 0)]$. Each $D_i$ has length $2d(v,0)$ and the intervals increase monotonically in $v(i)$ by property (1), as illustrated in \cref{fig:ranges}.

\begin{figure}[t]
\centering
\begin{tikzpicture}[>=stealth, line cap=round,
    cand/.style={circle, fill=black, inner sep=1.7pt},
    clab/.style={red!80!black, font=\footnotesize\bfseries},
    vot/.style={draw, fill=black!4, rounded corners=1pt, inner sep=2pt, font=\scriptsize},
    dl/.style={font=\scriptsize, fill=white, inner sep=0.8pt},
    rng/.style={blue!75!black, thick},
    rlab/.style={blue!70!black, font=\scriptsize}]

  \foreach \v/\yc in {0/0, 2/1.4, 4/2.8}{
    \draw[gray!40, dotted] (-2.6, \yc) -- (2.6, \yc);
    \node[rlab, anchor=east] at (-2.75, \yc) {$d\paren*{\cdot, 0}=\v$};}

  \coordinate (V)  at (0, 0.7);
  \coordinate (c2) at (-1.6, 1.4);
  \coordinate (c1) at (1.6, 1.4);
  \coordinate (c0) at (0, 0);
  \draw (V) -- node[dl]{$1$} (c2);
  \draw (V) -- node[dl]{$3$} (c1);
  \draw (V) -- node[dl]{$1$} (c0);

  \draw[rng] (-2.15,0) -- (-2.15,1.4);
  \draw[rng] (-2.25,0)--(-2.05,0); \draw[rng] (-2.25,1.4)--(-2.05,1.4);
  \node[rlab, anchor=north] at (-2.15,-0.04) {$D_2$};
  \draw[rng] (0.62,0) -- (0.62,1.4);
  \draw[rng] (0.52,0)--(0.72,0); \draw[rng] (0.52,1.4)--(0.72,1.4);
  \node[rlab, anchor=west] at (0.78,0.7) {$D_0$};
  \draw[rng] (2.15,1.4) -- (2.15,2.8);
  \draw[rng] (2.05,1.4)--(2.25,1.4); \draw[rng] (2.05,2.8)--(2.25,2.8);
  \node[rlab, anchor=south] at (2.15,2.84) {$D_1$};

  \node[cand] at (c2){}; \node[clab, above left=-3pt] at (c2){$2$};
  \node[cand] at (c1){}; \node[clab, above right=-3pt] at (c1){$1$};
  \node[cand] at (c0){}; \node[clab, below=1pt] at (c0){$0$};
  \node[vot] at (V){$(2,0)$};
\end{tikzpicture}
\caption{The triangle-inequality constraint for voter $\paren*{2, 0}$ on the $n = 3, k = 2$ cyclic instance described in \cref{fig:cyclic}. The vertical axis is $d\paren*{\cdot, 0}$. Properties (2)--(3) confine each $d\paren*{i, 0}$ to the blue interval $D_i = [d\paren*{v, i} - d\paren*{v, 0}, d\paren*{v, i} + d\paren*{v, 0}]$ of length $2 d\paren*{v, 0} = 2$, and property (1) forces the $D_i$ to increase monotonically along the preference order $2, 0, 1$: here $D_2 = D_0 = [0, 2]$ and $D_1 = [2, 4]$.}
\label{fig:ranges}
\end{figure}

With these preliminaries, we can now derive a better lower bound for $d(v, 0)$, as well as a better upper bound for $\delta_v^i = d(v, i) - d(v, 0)$.

\begin{lemma} \label{lem:dv0}
For every voter $v$,
\[
d(v, 0) \geq \frac{1}{2} \max\paren*{0,\ \max_{v\paren*{i} < v\paren*{j}} \paren*{d\paren*{i, 0} - d\paren*{j, 0}}}.
\]
\end{lemma}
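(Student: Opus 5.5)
The bound follows from two facts: the reported order forces $d(v,i)\le d(v,j)$ whenever $v(i)<v(j)$, and the triangle inequality through $v$ links each $d(\cdot,0)$ to $d(v,\cdot)$. Since $d(v,0)\ge 0$ trivially, the outer $\max(0,\cdot)$ needs no work. So it suffices to fix a pair $i,j$ with $v(i)<v(j)$ and show $d(i,0)-d(j,0)\le 2d(v,0)$.

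Fix such a pair. Applying the triangle inequality to the triangle $(i,v,0)$ gives
\[
d(i,0) \le d(i,v) + d(v,0).
\]
Applying it to the triangle $(j,v,0)$ gives $d(v,j)\le d(v,0)+d(0,j)$, which rearranges to
\[
d(j,0) \ge d(v,j) - d(v,0).
\]
Subtracting the second bound from the first yields
\[
d(i,0) - d(j,0) \le d(v,i) - d(v,j) + 2d(v,0).
\]
Because $v(i)<v(j)$, the convention stated before the lemma gives $d(v,i)\le d(v,j)$, so the first difference on the right is nonpositive. Hence $d(i,0)-d(j,0)\le 2d(v,0)$.

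In interval language, this says $d(i,0)$ is at most the right endpoint of $D_i$ and $d(j,0)$ is at least the left endpoint of $D_j$. Both intervals have length $2d(v,0)$, and monotonicity puts the left end of $D_j$ at or above the left end of $D_i$. So any drop of $d(j,0)$ below $d(i,0)$ is at most one interval length.

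Taking the maximum over all such pairs and halving gives the claim. The only care needed concerns unranked candidates. The convention $v(i)=k$ lets $j$ be unranked; in that case $d(v,i)\le d(v,j)$ still holds, because an unranked candidate is no closer than any ranked one, so the argument applies. If both $i$ and $j$ are unranked, then $v(i)=v(j)=k$, the strict inequality fails, and the pair does not appear in the maximum. The case $i=0$ or $j=0$ is also covered: the same two inequalities hold with $d(0,0)=0$.
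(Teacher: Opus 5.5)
Your proof is correct and follows the paper's own argument: the paper notes that $d(i,0)\in D_i$ and $d(j,0)\in D_j$, that each interval has length $2d(v,0)$, and that the intervals increase with rank, and your two triangle inequalities together with $d(v,i)\le d(v,j)$ are exactly that reasoning written out. Your explicit version also shows that only the triangle inequalities and the reported order are used, so the bound holds for every voter in every consistent metric without the paper's standing assumption $d(v,i)\ge d(v,0)$.
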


\begin{proof}
Recall that the length of each interval $D_i$ is exactly $2 d\paren*{v, 0}$. From this bound and the fact that the intervals are monotonically increasing, it follows that if $v\paren*{i} < v\paren*{j}$, $d\paren*{i, 0} - d\paren*{j, 0}$ cannot exceed $2d\paren*{v, 0}$. Rearranging and taking the maximum over all such pairs gives $d\paren*{v, 0} \geq \frac{1}{2} \max_{v\paren*{i} < v\paren*{j}} \paren*{d\paren*{i, 0} - d\paren*{j, 0}}$. Since $d\paren*{v, 0} \geq 0$ trivially, we may also include $0$ in the maximum.
\qed
\end{proof}

\begin{lemma} \label{lem:dvi}
For every voter $v$ and candidate $i$, $\delta_v^i = d(v, i) - d(v, 0)$ is at most $\min\paren*{d\paren*{i, 0},\ \min_{v\paren*{i} < v\paren*{j}} d\paren*{j, 0}}$.
\end{lemma}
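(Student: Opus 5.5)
The bound $\delta_v^i \le \min\paren*{d\paren*{i,0},\ \min_{v\paren*{i}<v\paren*{j}} d\paren*{j,0}}$ is a minimum of two quantities, so I would establish the two upper bounds separately and then take the smaller. Both bounds use only the triangle inequality in triangles containing $v$ and $0$, together with the ordering facts from the setup. These facts are: the voter's list is ordered by distance (property (1)), and $d\paren*{v,i} \ge d\paren*{v,0}$ whenever this is relevant (property (2)). The second fact makes $\delta_v^i$ nonnegative, so we are bounding a genuine gap.

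\emph{First bound.} The triangle inequality on the triangle $\paren*{v,i,0}$ (property (3)) gives $d\paren*{v,i} \le d\paren*{v,0} + d\paren*{i,0}$. Subtracting $d\paren*{v,0}$ yields $\delta_v^i \le d\paren*{i,0}$. Equivalently, $d\paren*{i,0}$ lies in the interval $D_i$, whose lower endpoint is $d\paren*{v,i} - d\paren*{v,0} = \delta_v^i$.

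\emph{Second bound.} Fix any $j$ with $v\paren*{i} < v\paren*{j}$, so that $d\paren*{v,i} \le d\paren*{v,j}$ by property (1). I must show $\delta_v^i \le d\paren*{j,0}$. Chaining the ordering with the triangle inequality on $\paren*{v,j,0}$ gives
\[
d\paren*{v,i} \le d\paren*{v,j} \le d\paren*{v,0} + d\paren*{j,0},
\]
hence $\delta_v^i \le d\paren*{j,0}$. In the language of the intervals, $D_i$ lies below $D_j$, so the lower endpoint of $D_i$ is at most the lower endpoint of $D_j$, which is at most $d\paren*{j,0}$. Taking the minimum over all such $j$, and combining with the first bound, gives the statement.

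The one point needing care is the convention $v\paren*{j} = k$ for an unranked $j$. If $i$ is ranked and $j$ is not, the inequality $d\paren*{v,i} \le d\paren*{v,j}$ still holds, because the ranked candidates are $v$'s $k$ nearest. The inequality is never needed when $i$ is itself unranked, since then no $j$ satisfies $v\paren*{i} < v\paren*{j}$ and only the first bound is in play. With that checked, the argument is immediate, and I expect no other obstacle.
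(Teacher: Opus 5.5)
Your proof is correct and matches the paper's: both use $\delta_v^j \le d(j,0)$ for every $j$ (the triangle inequality on $(v,j,0)$, i.e.\ $\delta_v^j$ is the lower endpoint of $D_j$) and chain it with $\delta_v^i \le \delta_v^j$ whenever $v(i) < v(j)$. Your check of the unranked convention $v(j) = k$ is a small addition the paper leaves implicit; your aside that $\delta_v^i \ge 0$ is never used and need not hold when $v$ ranks $i$ above $0$, where the $j = 0$ term gives $\delta_v^i \le 0$.
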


\begin{proof}
Note that $\delta_v^i$ is precisely the lower endpoint of each $D_i$. Therefore, it must hold for all $j$ that $\delta_v^j \leq d\paren*{j, 0}$. Then, if $v(i) < v(j)$, $\delta_v^i \leq \delta_v^j \leq d\paren*{j, 0}$, and the desired bound follows.
\qed
\end{proof}



Combining the two lemmas gives the following upper bound for the fixed optimal candidate $0$.

\begin{theorem} \label{thm:topk-general}
Let $d$ be any consistent metric in which candidate $0$ has minimum social cost. Then every mechanism $q$ satisfies
\[
\Dist{q} \leq 1 + 2 \cdot \max_d \frac{\sum_i q_i \sum_v \min\paren*{d\paren*{i, 0},\ \min_{v\paren*{i} < v\paren*{j}} d\paren*{j, 0}}}{\sum_v \max\paren*{0,\ \max_{v\paren*{i} < v\paren*{j}} \paren*{d\paren*{i, 0} - d\paren*{j, 0}}}}.
\]
\end{theorem}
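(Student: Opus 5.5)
Fix a consistent metric $d$ in which candidate $0$ minimizes social cost. The plan is to write the ratio in \emph{excess} form, exactly as in the proof of \cref{thm:lower-bound}, and then bound the numerator and denominator voter by voter using \cref{lem:dvi} and \cref{lem:dv0}. Since $C_d\paren*{0} = \min_i C_d\paren*{i}$,
\[
\frac{\sum_i q_i C_d\paren*{i}}{C_d\paren*{0}} = 1 + \frac{\sum_i q_i \paren*{C_d\paren*{i} - C_d\paren*{0}}}{C_d\paren*{0}} = 1 + \frac{\sum_i q_i \sum_v \delta_v^i}{\sum_v d\paren*{v, 0}},
\]
where $\delta_v^i = d\paren*{v,i} - d\paren*{v,0}$ as defined before \cref{lem:dvi}. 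This uses only $\sum_i q_i = 1$.

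For the numerator, I would apply \cref{lem:dvi} to each pair $\paren*{v,i}$. This gives $\delta_v^i \le \min\paren*{d\paren*{i,0}, \min_{v\paren*{i}<v\paren*{j}} d\paren*{j,0}}$. Because $q_i \ge 0$, summing over $v$ and then over $i$ with weights $q_i$ yields the numerator of the claimed expression, up to the factor $2$. For the denominator, \cref{lem:dv0} gives
\[
d\paren*{v,0} \ge \tfrac12 \max\paren*{0, \max_{v\paren*{i}<v\paren*{j}} \paren*{d\paren*{i,0}-d\paren*{j,0}}}.
\]
Summing over $v$, we get $C_d\paren*{0} \ge \tfrac12 S_d$, where $S_d$ is the claimed denominator. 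The $\tfrac12$ inverts to produce the factor $2$ in front. Combining the two bounds gives, for this particular $d$,
\[
\frac{\bE\,[C_d]}{C_d\paren*{0}} \le 1 + 2\,\frac{N_d}{S_d},
\]
where $N_d$ is the claimed numerator. Both $N_d$ and $S_d$ depend only on the candidate-to-candidate distances $d\paren*{\cdot,0}$ and on the reported lists. Taking the supremum over consistent $d$ with $0$ optimal then yields the theorem, with $\max_d$ read as that supremum.

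The main obstacle is the degenerate case where the lower bound on $C_d\paren*{0}$ collapses, namely $S_d = 0$ while $C_d\paren*{0} > 0$. In that case the inequality $1 + N_d/C_d\paren*{0} \le 1 + 2N_d/S_d$ holds trivially, as $+\infty$, unless $N_d = 0$ too, and if $N_d = 0$ the ratio is $1$. So I would argue separately that the bound never needs to hold in the other direction, and check the $0/0$ convention. I also need to confirm that \cref{lem:dvi} genuinely gives $\delta_v^i \le d\paren*{i,0}$ even for unranked $i$ (those with $v\paren*{i}=k$). This follows from the triangle inequality $d\paren*{v,i} \le d\paren*{v,0} + d\paren*{i,0}$ alone, so it holds regardless of ranking. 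Finally, I need the minimum over $v\paren*{i}<v\paren*{j}$ to be valid when $j$ is unranked; that case follows because $v$ ranks $i$ above $j$ and so $d\paren*{v,i} \le d\paren*{v,j}$.
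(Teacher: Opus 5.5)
Your proposal is correct and follows the paper's proof. Both write the ratio as $1 + \sum_i q_i \sum_v \delta_v^i / \sum_v d(v,0)$, then bound the numerator term by term with \cref{lem:dvi} (using $q_i \ge 0$) and the denominator with \cref{lem:dv0}. Your extra checks go beyond the paper: the degenerate case $S_d = 0$, and deriving both bounds directly from the triangle inequality and the reported order (so unranked candidates are covered). These are worth keeping, because the paper's interval argument for those lemmas tacitly assumes $d(v,i) \ge d(v,0)$ for all $i$, which fails for candidates ranked above $0$.
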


\begin{proof}
Rewriting
\begin{align*}
\Dist{q} &= 1 + \frac{\sum_i q_i \paren*{C_d\paren*{i} - C_d\paren*{0}}}{C_d\paren*{0}} \\
&= 1 + \frac{\sum_i q_i \sum_v \delta_v^i}{\sum_v d\paren*{v, 0}} 
\end{align*}
And substituting the respective bounds on $\delta_v^i$ and $d\paren*{v, 0}$ from \cref{lem:dv0} and \cref{lem:dvi} yields the desired upper bound on $\Dist{q}$.
\qed
\end{proof}

The remaining free quantities are the nonnegative edge lengths $d\paren*{i, 0}$ for $i \neq 0$. Maximizing the right-hand side over them, and over which candidate plays the role of $0$, bounds the worst-case distortion. We show next that this bound is exactly tight for every $k$.

\paragraph{A matching construction.}

We exhibit, for every edge-length assignment $d(i, 0)$, an explicit consistent metric attaining the right-hand side of \cref{thm:topk-general}. We assume throughout that \cref{lem:dv0,lem:dvi} hold with equality. The following lemma removes the need to reason about voter--voter distances:

\begin{lemma} \label{lem:hub}
Let $\paren*{\mathcal{C}, d}$ be a finite metric space and let $P$ be a set of additional points, each $p \in P$ equipped with distances $d\paren*{p, c} \ge 0$ to every $c \in \mathcal{C}$ satisfying, for all $p \in P$ and $c, c' \in \mathcal{C}$,
\[
d\paren*{c, c'} \le d\paren*{p, c} + d\paren*{p, c'} \qquad \text{and} \qquad d\paren*{p, c} \le d\paren*{p, c'} + d\paren*{c', c}.
\]
Then setting $d\paren*{p, p'} = \min_{c \in \mathcal{C}} \paren*{d\paren*{p, c} + d\paren*{c, p'}}$ for $p, p' \in P$ extends $d$ to a metric on $\mathcal{C} \cup P$ preserving every prescribed distance.
\end{lemma}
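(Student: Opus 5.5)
We verify the metric axioms on $\mathcal{C} \cup P$ directly. Distances inside $\mathcal{C}$ and between $P$ and $\mathcal{C}$ are unchanged, so every prescribed distance is preserved. The only new quantities are $d(p,p')$ for $p,p' \in P$.

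The plan is to view the extension as a shortest-path metric. Consider the weighted graph on $\mathcal{C} \cup P$ whose edges are the candidate--candidate pairs and the point--candidate pairs, with the given lengths, and with no direct $P$--$P$ edges. Let $\rho$ be its shortest-path distance. Then $\rho$ is automatically symmetric, nonnegative, and satisfies the triangle inequality, so it remains to check two facts: $\rho$ agrees with the prescribed distances on the given pairs, and $\rho(p,p')$ equals the formula $\min_c (d(p,c)+d(c,p'))$.

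\emph{Step 1: $\rho$ restricted to $\mathcal{C}$ is $d$.} Take any path between $c$ and $c'$. Each visit to a point $p \in P$ enters and leaves through candidates $a$ and $b$, contributing $d(a,p)+d(p,b) \ge d(a,b)$ by the first hypothesis. Shortcutting every such visit never increases the length, so the path reduces to a candidate-only path. Since $d$ is a metric on $\mathcal{C}$, any candidate-only path is at least $d(c,c')$.

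\emph{Step 2: $\rho(p,c) = d(p,c)$.} A shortest path from $p$ to $c$, after the shortcutting of Step 1, is $p \to c' \to \dots \to c$. Its length is at least $d(p,c') + d(c',c)$, which is at least $d(p,c)$ by the second hypothesis.

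\emph{Step 3: the formula for $P$--$P$ distances.} A path from $p$ to $p'$ must leave $p$ through some candidate $c$ and enter $p'$ through some candidate $c''$. By Steps 1 and 2 its length is at least $d(p,c) + d(c,c'') + d(c'',p')$. Applying the second hypothesis to $p'$ gives $d(c,c'') + d(c'',p') \ge d(c,p')$. So every such path has length at least $d(p,c)+d(c,p')$, and the two-edge path through $c$ attains this value, giving $\rho(p,p') = \min_c (d(p,c)+d(c,p'))$.

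The main subtlety is that this formula only gives a pseudometric. For instance, $d(p,p) = \min_c 2d(p,c)$ may be nonzero, and distinct points may end up at distance $0$. We will handle this by declaring $d(p,p)=0$ explicitly, which leaves the triangle inequality intact, and by allowing zero distances between distinct points, as the paper's constructions already do when they collapse locations. The key inequality to double-check is the shortcutting step in Step 1, because it is the only place the first hypothesis enters.
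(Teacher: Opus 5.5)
Your proof is correct, but it takes a different route from the paper.

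\textbf{The paper's route.} The paper checks the triangle inequality directly, one type of triple at a time:
\begin{itemize}
\item triangles inside $\mathcal{C}$, and triangles with exactly one point of $P$, are the hypotheses themselves;
\item for $(p,p',c)$ it takes a minimizer $c^*$ of $d(p,p')$, then applies the first hypothesis at $p'$ and the second at $p$;
\item the $(p,p',p'')$ case is only sketched, as ``expanding both minima.''
\end{itemize}

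\textbf{Your route.} You define the extension as the shortest-path (pseudo)metric $\rho$ on the graph with only $\mathcal{C}$--$\mathcal{C}$ and $P$--$\mathcal{C}$ edges. Every triangle inequality, including the $(p,p',p'')$ case the paper leaves to the reader, then holds automatically by concatenating paths. The work moves to showing two things: $\rho$ does not shorten any prescribed distance, and $\rho$ reproduces the stated formula. Your Steps 1--3 do this correctly.
\begin{itemize}
\item Step 1: since there are no $P$--$P$ edges, every interior visit to a point of $P$ is flanked by candidates. The first hypothesis therefore justifies each shortcut, including the degenerate case $a=b$.
\item Steps 2--3: the second hypothesis handles the $p$--$c$ and $p$--$p'$ distances. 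For the latter, take $c$ and $c''$ to be the candidates immediately after the first departure from $p$ and immediately before the final arrival at $p'$. Step 1 then bounds the middle segment below by $d(c,c'')$.
\item If $P$ is infinite, read $\rho$ as an infimum over finite paths. Your bounds hold path by path, so nothing changes.
\end{itemize}

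\textbf{What each approach buys.} The paper's argument is more elementary: each inequality is written out explicitly, with no auxiliary graph. Yours removes the case analysis and makes one fact transparent. Any metric extension must satisfy $d(p,p') \le d(p,c)+d(c,p')$ for every $c$, so the formula gives the \emph{largest} admissible voter--voter distances.

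Your handling of the diagonal is also more careful than the paper's. The paper asserts the function is ``zero on the diagonal,'' which fails for the literal formula at $p=p'$, since it gives $\min_c 2d(p,c)$. Your convention $d(p,p)=0$ is exactly what $\rho$ gives via the empty path. You also correctly note that the result is, in general, only a pseudometric.
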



\cref{lem:hub} is proven in \cref{a}. It enables the following construction:
\begin{enumerate}
  \item\label{step:construction-first} Set $\ell_i = d\paren*{i, 0}$ and $d\paren*{i, j} = \max\paren*{\ell_i, \ell_j}$ for all candidates $i, j \neq 0$.
  \item For each voter $v$ independently, equality in \cref{lem:dv0,lem:dvi} uniquely determines $\delta_v^i$, $d\paren*{v, 0}$, and hence all $d\paren*{v, i}$.
  \item\label{step:construction-last} By \cref{lem:hub}, once we verify that these distances form valid voter--candidate--candidate triangles, we can merge candidate--candidate and voter--candidate distances into a single metric.
\end{enumerate}

For instance, \cref{fig:merge} illustrates how we may construct a tight lower bound for the $n = 3, k = 2$ cyclic case depicted in \cref{fig:cyclic}. Now, it remains to show that all voter--candidate--candidate triangles satisfy the triangle inequality.

\begin{figure}[t]
\centering
\begin{tikzpicture}[
    >=stealth, line cap=round,
    cand/.style={circle, fill=black, inner sep=1.7pt},
    clab/.style={red!80!black, font=\footnotesize\bfseries},
    vot/.style={draw, fill=black!4, rounded corners=1pt, inner sep=2pt, font=\scriptsize},
    dl/.style={font=\scriptsize, fill=white, inner sep=0.8pt},
    vc/.style={black},
    faint/.style={gray!60, thin},
    rng/.style={blue!75!black, thick},
    rlab/.style={blue!70!black, font=\scriptsize},
    bb/.style={blue!75!black, thick}]

  \foreach \v/\m in {0/0, 2/1.1, 4/2.2}{
    \draw[gray!40, dotted] (-1.25, 4.4+\m) -- (9.7, 4.4+\m);
    \node[gray!70, font=\scriptsize, anchor=east] at (-1.3, 4.4+\m) {$\v$};}
  \node[gray!70, font=\scriptsize, anchor=east] at (-1.3, 6.95) {$d(\cdot,0)$};
  \draw[gray!50, dashed] (2.5, 4.0) -- (2.5, 6.75);
  \draw[gray!50, dashed] (6.4, 4.0) -- (6.4, 6.75);

  \begin{scope}[shift={(0,4.4)}]
    \coordinate (Ac0) at (0,0); \coordinate (Ac1) at (0.9,1.1); \coordinate (Ac2) at (1.8,1.1);
    \coordinate (Av) at (0,0);
    \draw[vc] (Av) -- node[dl]{$2$} (Ac1);
    \draw[vc] (Av) -- node[dl]{$2$} (Ac2);
    \draw[rng] (0.2,0)--(0.4,0); \node[rlab, anchor=west] at (0.44,0) {$D_0$};
    \draw[rng] (0.81,1.1)--(0.99,1.1); \node[rlab, anchor=north] at (0.9,0.96) {$D_1$};
    \draw[rng] (1.71,1.1)--(1.89,1.1); \node[rlab, anchor=north] at (1.8,0.96) {$D_2$};
    \node[cand] at (Ac0){};
    \node[cand] at (Ac1){}; \node[clab, above=1pt] at (Ac1){$1$};
    \node[cand] at (Ac2){}; \node[clab, above=1pt] at (Ac2){$2$};
    \node[vot] at (Av){$(0,1)$}; \node[clab, below=9pt] at (Av){$0$};
  \end{scope}

  \begin{scope}[shift={(3.9,4.4)}]
    \coordinate (Bc1) at (0,1.1); \coordinate (Bc2) at (0.9,1.1); \coordinate (Bc0) at (1.8,0);
    \coordinate (Bv) at (1.8,0.55);
    \draw[vc] (Bv) -- node[dl]{$1$} (Bc1);
    \draw[vc] (Bv) -- node[dl]{$1$} (Bc2);
    \draw[vc] (Bv) -- node[dl]{$1$} (Bc0);
    \draw[rng] (0,0)--(0,1.1); \draw[rng] (-0.09,0)--(0.09,0); \draw[rng] (-0.09,1.1)--(0.09,1.1);
    \node[rlab, anchor=north] at (0,-0.05) {$D_1$};
    \draw[rng] (0.9,0)--(0.9,1.1); \draw[rng] (0.81,0)--(0.99,0); \draw[rng] (0.81,1.1)--(0.99,1.1);
    \node[rlab, anchor=north] at (0.9,-0.05) {$D_2$};
    \draw[rng] (2.15,0)--(2.15,1.1); \draw[rng] (2.06,0)--(2.24,0); \draw[rng] (2.06,1.1)--(2.24,1.1);
    \node[rlab, anchor=west] at (2.28,0.55) {$D_0$};
    \node[cand] at (Bc1){}; \node[clab, above=1pt] at (Bc1){$1$};
    \node[cand] at (Bc2){}; \node[clab, above=1pt] at (Bc2){$2$};
    \node[cand] at (Bc0){}; \node[clab, below=1pt] at (Bc0){$0$};
    \node[vot] at (Bv){$(1,2)$};
  \end{scope}

  \begin{scope}[shift={(7.8,4.4)}]
    \coordinate (Cc2) at (0,1.1); \coordinate (Cc0) at (0.9,0); \coordinate (Cc1) at (1.8,1.1);
    \coordinate (Cv) at (0.9,0.55);
    \draw[vc] (Cv) -- node[dl]{$1$} (Cc2);
    \draw[vc] (Cv) -- node[dl]{$1$} (Cc0);
    \draw[vc] (Cv) -- node[dl, pos=0.62]{$3$} (Cc1);
    \draw[rng] (0,0)--(0,1.1); \draw[rng] (-0.09,0)--(0.09,0); \draw[rng] (-0.09,1.1)--(0.09,1.1);
    \node[rlab, anchor=north] at (0,-0.05) {$D_2$};
    \draw[rng] (1.2,0)--(1.2,1.1); \draw[rng] (1.11,0)--(1.29,0); \draw[rng] (1.11,1.1)--(1.29,1.1);
    \node[rlab, anchor=west] at (1.32,0.3) {$D_0$};
    \draw[rng] (1.8,1.1)--(1.8,2.2); \draw[rng] (1.71,1.1)--(1.89,1.1); \draw[rng] (1.71,2.2)--(1.89,2.2);
    \node[rlab, anchor=south] at (1.8,2.24) {$D_1$};
    \node[cand] at (Cc2){}; \node[clab, above=1pt] at (Cc2){$2$};
    \node[cand] at (Cc0){}; \node[clab, below=1pt] at (Cc0){$0$};
    \node[cand] at (Cc1){}; \node[clab, above=1pt] at (Cc1){$1$};
    \node[vot] at (Cv){$(2,0)$};
  \end{scope}

  \begin{scope}[shift={(3.0,0)}]
    \coordinate (m0) at (0,0); \coordinate (m1) at (1.6,0); \coordinate (m2) at (3.2,0);
    \coordinate (vA) at (0,1.9); \coordinate (vB) at (1.6,1.9); \coordinate (vC) at (3.2,1.9);
    \foreach \a in {vA,vB,vC}{\foreach \b in {m0,m1,m2}{\draw[faint] (\a)--(\b);}}
    \draw[black] (m0) to[bend right=16] node[dl]{$2$} (m1);
    \draw[black] (m1) to[bend right=16] node[dl]{$2$} (m2);
    \draw[black] (m0) to[bend right=30] node[dl]{$2$} (m2);
    \draw[bb] (vA) to[bend left=22] node[dl]{$1$} (vB);
    \draw[bb] (vB) to[bend left=22] node[dl]{$2$} (vC);
    \draw[bb] (vA) to[bend left=40] node[dl]{$1$} (vC);
    \node[cand] at (m0){}; \node[clab, below=2pt] at (m0){$0$};
    \node[cand] at (m1){}; \node[clab, below=2pt] at (m1){$1$};
    \node[cand] at (m2){}; \node[clab, below=2pt] at (m2){$2$};
    \node[vot] at (vA){$(0,1)$}; \node[vot] at (vB){$(1,2)$}; \node[vot] at (vC){$(2,0)$};
  \end{scope}

  \draw[->, gray!75, thick] (0.6,3.95) to[bend right=9] (4.0,3.0);
  \draw[->, gray!75, thick] (4.5,3.95) -- (4.6,3.1);
  \draw[->, gray!75, thick] (8.4,3.95) to[bend left=9]  (5.2,3.0);
\end{tikzpicture}
\caption{Constructing the tight lower bound for the cyclic instance, where $d\paren*{1, 0} = d\paren*{2, 0} = 2$. \emph{Top:} each voter is placed independently in the format of \cref{fig:ranges}. For voter $\paren*{0, 1}$, $d\paren*{v, 0} = 0$ places it at candidate $0$ and collapses its ranges $D_i$ to points. \emph{Bottom:} the three placements are merged with the common candidate--candidate distances (all $2$); the voter--voter distances (blue) then follow from the shortest-path closure of \cref{lem:hub}.}
\label{fig:merge}
\end{figure}

\begin{lemma} \label{lem:vcc}
When \cref{lem:dv0} and \cref{lem:dvi} hold with equality, the triangle inequality holds for all voter--candidate--candidate triangles $\paren*{v, i, j}$.
\end{lemma}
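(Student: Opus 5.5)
The plan is to reduce the claim to the two inequalities required by \cref{lem:hub}, for every voter $v$ and every pair of candidates $i,j$ (including $0$):
\[
d(i,j) \le d(v,i)+d(v,j) \qquad\text{and}\qquad |d(v,i)-d(v,j)| \le d(i,j).
\]
I will use the explicit form of the construction. Write $\ell_0 = 0$, so that $d(i,j)=\max(\ell_i,\ell_j)$ holds uniformly, including $d(i,0)=\ell_i$. Set $a_v = d(v,0)$. By equality in \cref{lem:dv0},
\[
a_v = \tfrac12 \max\paren*{0,\ \max_{v(i)<v(m)}(\ell_i-\ell_m)}.
\]
By equality in \cref{lem:dvi},
\[
\delta_v^i = \min\paren*{\ell_i,\ \min_{v(i)<v(m)} \ell_m}, \qquad d(v,i)=a_v+\delta_v^i.
\]
In words, $\delta_v^i$ is the minimum of $\ell$ over $i$ together with every candidate ranked strictly below $i$ by $v$. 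Two immediate facts follow: $0\le \delta_v^i \le \ell_i$, and $\delta_v^0 = 0$ because $\ell_0=0$.

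\emph{Second inequality.} This one should be immediate. Since $d(v,i)-d(v,j)=\delta_v^i-\delta_v^j$ and both terms are nonnegative,
\[
|\delta_v^i-\delta_v^j| \le \max(\delta_v^i,\delta_v^j) \le \max(\ell_i,\ell_j) = d(i,j).
\]
This handles the case $j=0$ as well, where it reads $\delta_v^i\le \ell_i$.

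\emph{First inequality.} This is the only step with content, and I expect it to be the main obstacle. We need $\max(\ell_i,\ell_j)\le 2a_v+\delta_v^i+\delta_v^j$. By symmetry assume $\ell_i\ge \ell_j$; since $\delta_v^j\ge 0$, it suffices to prove $2a_v+\delta_v^i\ge \ell_i$. There are two cases.
\begin{itemize}
\item If $\delta_v^i=\ell_i$, the inequality is trivial.
\item Otherwise $\delta_v^i=\ell_m<\ell_i$ for some candidate $m$ with $v(i)<v(m)$. The pair $(i,m)$ then appears in the maximum defining $a_v$, so $2a_v\ge \ell_i-\ell_m=\ell_i-\delta_v^i$, which rearranges to the claim.
\end{itemize}
The case $j=0$ is included, giving $\ell_i\le 2a_v+\delta_v^i$. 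In effect, the choice of $a_v$ in \cref{lem:dv0} is exactly what makes the gap $\ell_i-\delta_v^i$ coverable.

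\emph{Remaining checks.} First, the candidate distances $d(i,j)=\max(\ell_i,\ell_j)$ form an ultrametric, and hence a metric on $\mathcal C$, which is what \cref{lem:hub} takes as input. Second, the degenerate cases $i=j$ are trivial. Finally, although it is not strictly part of the triangle inequality, I will note that $\delta_v^i$ is a suffix minimum along $v$'s order and is therefore nondecreasing in $v(i)$. Together with $\delta_v^i\ge 0=\delta_v^0$, this shows the constructed distances respect property (1) and keep $0$ weakly closest, so the voter's reported list remains consistent, with ties broken as reported.
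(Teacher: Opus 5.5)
Your proof is correct and follows the paper's argument. You reduce $d(i,j)\le d(v,i)+d(v,j)$ to $\ell_i\le 2d(v,0)+\delta_v^i$ for the index with the larger $\ell$, and use the same two-case split via the admissible pair $(i,m)$ in \cref{lem:dv0}. The only cosmetic difference is that you handle both reverse orientations at once with $|\delta_v^i-\delta_v^j|\le\max(\ell_i,\ell_j)$ rather than the paper's WLOG $v(i)<v(j)$; this also cleanly covers pairs of unranked candidates and, with $\ell_0=0$, triangles through candidate $0$.
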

\begin{proof}
Under equality, $\delta_v^i = \min\paren*{\ell_i,\ \min_{v\paren*{i} < v\paren*{k}} \ell_k}$, so $0 \le \delta_v^i \le \ell_i$ and $\delta_v^i$ is nondecreasing in $v\paren*{i}$; recall also $d\paren*{v, i} = d\paren*{v, 0} + \delta_v^i$ and $d\paren*{i, j} = \max\paren*{\ell_i, \ell_j}$. Without loss of generality, assume $v(i) < v(j)$, so that $\delta_v^i \le \delta_v^j$ and hence $d\paren*{v, i} \le d\paren*{v, j}$. The orientation $d\paren*{v, i} \le d\paren*{v, j} + d\paren*{i, j}$ is then immediate, and it remains to show
\[
d\paren*{i, j} \le d\paren*{v, i} + d\paren*{v, j} \qquad \text{and} \qquad d\paren*{v, j} \le d\paren*{v, i} + d\paren*{i, j}.
\]

For the second inequality, cancelling $d\paren*{v, 0}$ from both sides leaves $\delta_v^j \le \delta_v^i + \max\paren*{\ell_i, \ell_j}$, which holds since $\delta_v^j \le \ell_j \le \max\paren*{\ell_i, \ell_j}$ and $\delta_v^i \ge 0$.

For the first inequality, write
\[
d\paren*{v, i} + d\paren*{v, j} = 2 d\paren*{v, 0} + \delta_v^i + \delta_v^j.
\]
Let $i^\star \in \{i, j\}$ be whichever index has the larger $\ell$, so that $d\paren*{i, j} = \ell_{i^\star}$. Since $\delta_v^i, \delta_v^j \ge 0$, it suffices to show $\ell_{i^\star} \le 2 d\paren*{v, 0} + \delta_v^{i^\star}$. If $\delta_v^{i^\star} = \ell_{i^\star}$, this is immediate from $d\paren*{v, 0} \ge 0$. Otherwise $\delta_v^{i^\star} = \ell_k < \ell_{i^\star}$ for some $k$ with $v\paren*{i^\star} < v\paren*{k}$, and the claim becomes $\ell_{i^\star} - \ell_k \le 2 d\paren*{v, 0}$. This holds by \cref{lem:dv0}: since $v\paren*{i^\star} < v\paren*{k}$, the pair $\paren*{i^\star, k}$ is admissible, so $\ell_{i^\star} - \ell_k \le \max_{v\paren*{a} < v\paren*{b}} \paren*{\ell_a - \ell_b} \le 2 d\paren*{v, 0}$.
\qed
\end{proof}

Maximizing \cref{thm:topk-general} also over the choice of optimal candidate gives an exact characterization, with no restriction on $k$.

\begin{theorem} \label{thm:k-exact}
For every $k$, the worst-case distortion of any mechanism $q$ is exactly
\[
\Dist{q} = 1 + 2 \max_{i^\star \in [n]} \ \max_{\paren*{\ell_i}_{i \neq i^\star} \ge 0} \frac{\sum_i q_i \sum_v \min\paren*{\ell_i,\ \min_{v\paren*{i} < v\paren*{j}} \ell_j}}{\sum_v \max\paren*{0,\ \max_{v\paren*{i} < v\paren*{j}} \paren*{\ell_i - \ell_j}}}.
\]
\end{theorem}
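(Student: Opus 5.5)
The equality will come from two matching inequalities, one for each direction, with the choice of optimal candidate handled by an outer maximum.

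\emph{Upper bound.} Fix any consistent metric $d$. Let $i^\star$ be a candidate of minimum social cost, and relabel it as $0$. Then \cref{thm:topk-general} bounds the ratio for this $d$ by $1+2$ times the displayed fraction, evaluated at $\ell_i=d(i,i^\star)$. These values are nonnegative, so they form one admissible choice in the inner maximum. Taking the supremum over $d$ and then the maximum over $i^\star\in[n]$ gives ``$\le$''. The factor $\frac12$ in \cref{lem:dv0} is what converts $\sum_v d(v,0)$ into the denominator with the leading factor $2$. Voters with zero denominator contribution need no special treatment, since the bound is a ratio of sums. When the whole denominator vanishes, I interpret the fraction as $+\infty$, or as $0$ if the numerator also vanishes. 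The lower-bound construction below shows this convention matches the true distortion.

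\emph{Lower bound.} Fix $i^\star$ and $\ell=(\ell_i)_{i\ne i^\star}\ge0$. I will build a consistent metric whose ratio equals $1+2\cdot(\text{fraction at }\ell)$, following steps \ref{step:construction-first}--\ref{step:construction-last}.
\begin{enumerate}
\item Set $d(i,i^\star)=\ell_i$ and $d(i,j)=\max(\ell_i,\ell_j)$. This is a metric on the candidates; it is an ultrametric-like form centred at $i^\star$.
\item For each voter $v$, set $d(v,i^\star)$ and $\delta_v^i$ equal to the bounds of \cref{lem:dv0,lem:dvi}.
\item \cref{lem:vcc} supplies the voter--candidate--candidate triangle inequalities that \cref{lem:hub} needs. \cref{lem:hub} then extends everything to a metric on voters and candidates.
\end{enumerate}

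Next I check that $v$'s reported list is consistent with this metric. Since $\delta_v^i$ is nondecreasing in $v(i)$ and $\delta_v^{i^\star}=0$, the ranked candidates come in the right order. Unranked candidates have $v(i)=k$, which makes them weakly farthest. Ties would be broken in the reported order; if strictness is required, I would perturb slightly and pass to a limit.

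With equality in both lemmas, the identity in the proof of \cref{thm:topk-general} gives expected cost over $C_d(i^\star)$ exactly $1+2\cdot(\text{fraction})$. Every $\delta_v^i\ge0$, so $C_d(i)\ge C_d(i^\star)$ and $i^\star$ is indeed optimal. Hence the denominator $\min_i C_d(i)$ is $C_d(i^\star)$, and the construction attains the value. Maximizing over $\ell$ and $i^\star$ gives ``$\ge$''.

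\emph{Main obstacle.} The delicate point is the degenerate case where every $d(v,i^\star)$ is $0$. Then the optimum cost is zero and the distortion is infinite, or the ratio is undefined. I expect to handle this by checking that the fraction's numerator is then positive exactly when some $q_i>0$ with a positive $\delta$, so both sides are $+\infty$ together. A secondary concern is whether the supremum over $\ell$ is attained. Since the expression is homogeneous of degree $0$ in $\ell$, the supremum can be taken over a compact normalization, so a maximum over $\ell$ is harmless. The rest is direct bookkeeping from the earlier lemmas.
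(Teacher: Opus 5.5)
Your proposal is correct and follows the paper's proof. The upper bound is \cref{thm:topk-general} evaluated at $\ell_i = d(i,i^\star)$ and then maximized over $\ell$ and $i^\star$; the lower bound is the construction of Steps~\ref{step:construction-first}--\ref{step:construction-last}, validated by \cref{lem:vcc,lem:hub}. Your extra checks fill in details the paper leaves implicit without changing the argument: that the reported lists are weakly consistent with the built metric, that $\delta_v^i \ge 0$ makes $i^\star$ truly optimal, and how a vanishing denominator is handled.
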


\begin{proof}
The right-hand side is the bound of \cref{thm:topk-general}, maximized over the edge lengths and over the choice of optimal candidate $i^\star$, and so is an upper bound on the worst-case distortion. It is attained directly by the construction above, described in \Crefrange{step:construction-first}{step:construction-last} and illustrated in \cref{fig:merge}: for any $i^\star$ and any edge lengths $\paren*{\ell_i}$, placing $i^\star$ in the role of candidate $0$ and assigning the voter--candidate distances so that \cref{lem:dv0} and \cref{lem:dvi} hold with equality yields a metric whose validity is proven by \cref{lem:vcc} and \cref{lem:hub}. Therefore, the upper bound is tight, as desired. \qed
\end{proof}

\paragraph{Reduction to subsets for \texorpdfstring{$k \leq 2$}{k <= 2}.}

For $k \le 2$ the maximization over edge lengths in \cref{thm:k-exact} collapses to a clean combinatorial form. 

\begin{theorem} \label{thm:k2}
For $k \leq 2$, the distortion of any mechanism $q$ satisfies
\[
\Dist{q} = 1 + 2 \max_{i^\star \in [n]}  \max_{S \subseteq [n] \setminus \{i^\star\}} \frac{\sum_{i \in S} q_i \paren*{1 - h_i}}{\sum_v \mathbf{1}\left[ \exists i \in S,\ j \notin S : v\paren*{i} < v\paren*{j} \right]},
\]
where $i^\star$ plays the role of candidate $0$ and $h_i$ is the fraction of voters that prefer $i$ to some candidate $j \notin S$.
\end{theorem}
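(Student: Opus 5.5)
The plan is to start from \cref{thm:k-exact}, fix the optimal candidate $i^\star$ (renamed $0$, with $\ell_0 = 0$), and show that the inner maximization over edge lengths $\paren*{\ell_i}_{i \neq 0} \ge 0$ is attained at $0/1$ vectors $\ell = \mathbf{1}_S$ with $S \subseteq [n] \setminus \{0\}$. I would use a layer-cake decomposition. Write $\ell_i = \int_0^\infty \mathbf{1}[\ell_i > t]\, dt$ and let $S_t = \{i : \ell_i > t\}$. Let $N(\ell)$ and $D(\ell)$ denote the numerator and denominator of \cref{thm:k-exact}. I will show that $N(\ell) = \int_0^\infty N(\mathbf{1}_{S_t})\, dt$ and $D(\ell) = \int_0^\infty D(\mathbf{1}_{S_t})\, dt$. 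The mediant inequality, in its integral form, then gives $N(\ell)/D(\ell) \le \sup_t N(\mathbf{1}_{S_t})/D(\mathbf{1}_{S_t})$. Only finitely many distinct $S_t$ occur, so the supremum is a maximum over subsets. Conversely, every $\mathbf{1}_S$ is an admissible choice of $\ell$, so the two maxima coincide.

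\textbf{Numerator.} For each voter $v$ and candidate $i$, the term $\min\paren*{\ell_i,\ \min_{v(i)<v(j)} \ell_j}$ is a minimum of coordinates of $\ell$; here $\ell_0 = 0$ and unranked candidates have $v(j) = k$. Thresholding commutes with minima: $\mathbf{1}[\min_{a \in A} \ell_a > t] = \prod_{a \in A} \mathbf{1}[\ell_a > t]$. Hence this term equals $\int_0^\infty \mathbf{1}[\text{$i$ and every $j$ with } v(i)<v(j) \text{ lie in } S_t]\, dt$, and linearity gives the identity for $N$. Evaluating at $\ell = \mathbf{1}_S$, the term for $i \in S$ is $1$ exactly when $v$ does not prefer $i$ to any $j \notin S$. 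Here $0 \notin S$ is automatically such a $j$ whenever $v$ ranks $i$ above $0$. Summing over voters gives $q_i\paren*{1 - h_i}$, and candidates outside $S$ contribute $0$.

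\textbf{Denominator (main obstacle).} By the same thresholding, $D(\mathbf{1}_{S_t})$ counts the voters for which $t$ lies in $U_v = \bigcup_{v(i)<v(j)} [\ell_j, \ell_i)$. So I need $|U_v| = \max\paren*{0, \max_{v(i)<v(j)} (\ell_i - \ell_j)}$, which means $U_v$ must be a single interval. This is exactly where $k \le 2$ is used. Voter $v$ induces the weak order $a \succ b \succ \{\text{unranked}\}$, or just $a \succ \{\text{rest}\}$ when $k = 1$. I will check the cases by hand:
\begin{itemize}
\item If $\ell_a$ is the largest relevant value, every interval has right endpoint at most $\ell_a$. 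The union is $[\min(\ell_b, \min_x \ell_x), \ell_a)$, whose length is the largest difference.
\item If $\ell_b > \ell_a$, the intervals $[\ell_x, \ell_a)$ sit inside $[\ell_x, \ell_b)$ for each unranked $x$. The union is then $[\min_x \ell_x, \ell_b) \cup [\ell_b, \ell_a)$, where the second piece is empty, so it is again a single interval with the right length.
\end{itemize}
At $\ell = \mathbf{1}_S$, the $v$-term of $D$ is precisely $\mathbf{1}[\exists i \in S,\ j \notin S : v(i) < v(j)]$, which matches the stated denominator.

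Finally, I would handle degenerate $S$ with zero denominator. If no voter prefers some $i \in S$ to some $j \notin S$, then every $i \in S$ is ranked by no voter above $0$ or above any unranked candidate. Then $h_i = 1$ and the numerator vanishes, so such $S$ can be discarded, consistent with the convention for $0/0$ in \cref{thm:k-exact}. Maximizing over $i^\star$ then yields the stated formula. I expect the interval-union check for the denominator to be the only delicate step. It is also the step I would scrutinize for why the reduction fails for $k \ge 3$.
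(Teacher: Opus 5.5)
Your main argument is correct, and it is the paper's argument in different clothing. The paper fixes the relative order of the $\ell_i$ and uses a three-case formula for each voter's maximal difference to see that numerator and denominator are linear on that ordering cone. It then passes to the cone's extreme rays $\mathbf{1}_S$. Your indicators $\mathbf{1}_{S_t}$ are exactly those extreme rays, and the layer-cake integral is exactly the conic combination.

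Your interval check is right. For a list $(a,b)$, $U_v$ is $[\min(\ell_b,m),\ell_a)$ if $\ell_a\ge\ell_b$ and $[m,\ell_b)$ otherwise, where $m$ is the minimum of $\ell_x$ over unranked $x$. Your packaging gives a clean criterion: level-set additivity of $D$ holds iff each $U_v$ has the measure of its longest constituent interval. This absorbs the clamp at $0$ automatically and makes the $k\ge 3$ breakdown visible. For instance, a list $(a,0,b)$ with $0<\ell_a<\min_x\ell_x<\ell_b$ gives the disconnected set $U_v=[0,\ell_a)\cup[\min_x\ell_x,\ell_b)$.

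There is, however, a genuine error in your last paragraph. Suppose the denominator for $S$ vanishes, i.e.\ no voter prefers any $i\in S$ to any $j\notin S$. Then by definition $h_i=0$, not $1$, for every $i\in S$. So the numerator is $\sum_{i\in S}q_i$, which is positive whenever $q$ puts mass on $S$.

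Such $S$ cannot be discarded, because they encode unbounded distortion. Take $k=1$ and $\nu_i=0<q_i$. For any $i^\star\neq i$, the set $S=\{i\}$ gives ratio $q_i/0$, matching the infinite term $q_i\frac{1-\nu_i}{\nu_i}$ of \cref{cor:exact}. Dropping these sets would make your right-hand side finite while the true distortion is not.

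The fix is to adopt the convention $c/0=+\infty$ for $c>0$ on both sides. In your mediant step, a level set with $D(\mathbf{1}_{S_t})=0<N(\mathbf{1}_{S_t})$ then just makes the supremum infinite. The same value is attained on the edge-length side of \cref{thm:k-exact} at $\ell=\mathbf{1}_{S_t}$, so the two maxima still agree. Only sets with zero denominator and $\sum_{i\in S}q_i=0$ may be dropped.
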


\begin{proof}
Let $v$'s top two choices be $\paren*{a, b}$, then the inner maximum of the denominator equals
\[
\max_{v\paren*{i} < v\paren*{j}} \paren*{\ell_i - \ell_j} =
\begin{cases}
\ell_b - \min_{v\paren*{j} = 2} \ell_j & \text{if } a = 0, \\
\ell_a & \text{if } b = 0, \\
\max\paren*{\ell_a,\ \ell_b} & \text{otherwise,}
\end{cases}
\]
and the denominator itself is this value clamped below at $0$, which we note applies only in the $a = 0$ case. Once the relative order of the edges is fixed, each maximum and minimum here (and in the numerator) is a fixed coordinate, and since we can also determine whether the clamp applies in the $a = 0$ case by comparing the relative orders of $\ell_b$ and $\ell_j$, both the numerator and denominator become linear in $\paren*{\ell_i}_{i \neq 0}$. The maximum of a ratio of nonnegative linear forms under an ordering constraint is attained at the extreme rays of the ordering cone, when each $\ell_i$ is either $0$ or equal to the next-largest edge. Normalizing $\max_i \ell_i = 1$ and ranging over orderings, it suffices to take each $\ell_i \in \{0, 1\}$. Writing $S = \{ i \neq 0 : \ell_i = 1 \}$ and evaluating gives the desired expression.
\qed
\end{proof}

In the first-choice setting $\paren*{k = 1}$ this reduces to $h_i = \nu_i$ and a denominator of $\sum_{i \in S} \nu_i$, recovering \cref{lem:upper-bound}.

\paragraph{Failure for \texorpdfstring{$k > 2$}{k > 2}.}
The collapse to $\{0, 1\}$ edges is special to $k \le 2$. For $k \geq 3$, a voter with list $\paren*{a, 0, b}$ contributes the denominator term $\max\paren*{d\paren*{a, 0},\ d\paren*{b, 0} - \min_{v\paren*{j}=3} d\paren*{j, 0}}$---a maximum of an edge and a \emph{difference} of edges---so the denominator is no longer linear and \cref{thm:k2} does not extend to $k \geq 3$.

\paragraph{A closed form for the cyclic family.}

As an illustration of \cref{thm:k-exact}, we work out in closed form the exact optimal distortion of a natural symmetric family. The \emph{cyclic profile} $C\paren*{n, k}$ consists of $n$ equally numerous voter types, where type $t \in \{0, \dots, n-1\}$ reports $\paren*{t, t{+}1, \dots, t{+}k{-}1} \bmod n$. The $k = 2$, $n = 3$ instance of \cref{fig:cyclic} is $C\paren*{3, 2}$.

\begin{proposition} \label{prop:cyclic}
For every $n$ and $1 \le k \le n$, the optimal randomized distortion on the cyclic profile $C\paren*{n, k}$ is exactly
\[
D^\star\paren*{n, k} \;=\; 3 - 2 \cdot \frac{k\paren*{2n - k - 1}}{2n\paren*{n - 1}},
\]
attained by the uniform mechanism $q_i = \fr{1}{n}$.
\end{proposition}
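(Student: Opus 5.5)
The plan is to apply \cref{thm:k-exact} to $C\paren*{n, k}$ in two steps. First, I reduce to the uniform mechanism by a symmetry and convexity argument. Second, I evaluate the inner maximization for $q_i = \fr{1}{n}$, with one explicit edge-length assignment giving the lower bound and a separate argument giving the matching upper bound.

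\emph{Step 1 (uniform is optimal).} For each fixed $i^\star$ and fixed $\paren*{\ell_i}$, the ratio in \cref{thm:k-exact} is linear in $q$. Hence $\Dist{q}$ is a supremum of linear functions, and so it is convex in $q$. The profile $C\paren*{n, k}$ is invariant under the cyclic relabeling $t \mapsto t+1 \bmod n$. Therefore the distortion of the rotated mechanism $q \circ \sigma$ equals that of $q$. Averaging the $n$ rotations produces the uniform mechanism. Convexity then gives $\Dist{\text{uniform}} \le \Dist{q}$ for every $q$, so $D^\star\paren*{n,k}$ is the distortion of $q_i = \fr{1}{n}$. By the same cyclic symmetry, the outer maximum over $i^\star$ may be fixed at $i^\star = 0$.

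\emph{Step 2 (lower bound by an explicit assignment).} Take $\ell_i = 1$ for all $i \neq 0$. Recall that $v\paren*{j} = k$ for unranked $j$, and that candidate $0$ has $\ell_0 = 0$.
\begin{itemize}
\item For each of the $n-k$ voter types not ranking $0$: $\delta_v^i = 0$ for the $k$ ranked candidates and $\delta_v^i = 1$ for the $n-1-k$ unranked candidates $i \neq 0$.
\item For the type ranking $0$ at position $p \in \{0, \dots, k-1\}$: the $p$ candidates ranked before $0$ get $\delta_v^i = 0$, and the remaining $n-1-p$ candidates get $\delta_v^i = 1$.
\item The denominator term is $1$ for every type except the one with $0$ first, giving $n-1$ in total.
\end{itemize}
The ratio is therefore
\[
\frac{1}{n} \cdot \frac{(n-k)(n-1-k) + \sum_{p=0}^{k-1}(n-1-p)}{n-1}.
\]
This simplifies to $\fr{1}{2}\paren*{2 - \frac{k(2n-k-1)}{n(n-1)}}$, which yields exactly $D^\star\paren*{n,k}$. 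As a check, it gives $3 - \fr{2}{n}$ at $k=1$ and $2$ at $(n,k)=(3,2)$. The construction of \cref{lem:vcc,lem:hub} realizes this as a genuine metric, so the value is attained.

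\emph{Step 3 (upper bound: no other $\ell$ does better).} I expect this to be the main obstacle. For $k \le 2$ it follows from \cref{thm:k2}: enumerate subsets $S \subseteq [n]\setminus\{0\}$ and show that $S = [n]\setminus\{0\}$ maximizes the ratio. Cyclic symmetry lets one count, for an arbitrary $S$, how many types straddle the boundary of $S$ versus how many $(v,i)$ pairs contribute to the numerator.

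For general $k$ the denominator is not linear, as noted after \cref{thm:k2}. Here I would try a per-voter charging inequality. Sort the $\ell$ values and decompose $\ell$ via layer-cake, $\ell_i = \int_0^\infty \mathbf{1}[\ell_i > t]\,dt$. The numerator terms $\min\paren*{\ell_i, \min_j \ell_j}$ are additive under this decomposition. For the denominator I would lower-bound $\max_{v(i)<v(j)}\paren*{\ell_i - \ell_j}$ by the largest single-voter "jump" and then sum over the $n$ rotations of each voter type. The aim is to show that, summed over the cyclic orbit, the numerator is at most $\rho = \frac{2n(n-1) - k(2n-k-1)}{2n(n-1)}$ times the denominator for every threshold level simultaneously.

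If this per-level inequality fails because the denominator is superadditive only in aggregate, the fallback is to argue directly that the maximizer may be taken with all $\ell_i$ equal. The idea is to show that raising the smallest nonzero $\ell_i$ to the next level never decreases the ratio for the uniform $q$ on a cyclic profile. That reduces the problem to the computation in Step 2.
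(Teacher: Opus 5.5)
Your Steps 1 and 2 follow the paper's own argument. The paper also reduces to the uniform mechanism and $i^\star=0$ by rotational symmetry, then evaluates \cref{thm:k-exact} at $\ell_i=1$ with the same count $\sum_{p=0}^{k-1}(n-1-p)+(n-k)(n-k-1)$ over $n-1$. Your convexity-plus-averaging justification of the symmetry step is more careful than the paper's one-line appeal.

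However, Steps 1--2 only prove $D^\star(n,k)\ge 3-\frac{k(2n-k-1)}{n(n-1)}$. \cref{thm:k-exact} is a maximum over $\ell$, so evaluating at one $\ell$ gives a lower bound. Write $N(\ell)$ and $D(\ell)$ for its numerator and denominator with $q$ uniform and $i^\star=0$. Equality requires $N(\ell)\le\rho\,D(\ell)$ for every $\ell\ge 0$, and your Step 3 only lists strategies for this. The paper does not prove it either; it simply asserts that the maximum is attained at $\ell_i=1$. So you have correctly located the crux, but as written the upper bound is missing.

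For $k\le 2$ your route does close, and quickly. For $k=2$, by \cref{thm:k2} it suffices to take $\ell=\mathbf{1}_S$ with $S\subseteq[n]\setminus\{0\}$. The numerator count is $(n-2)|S|+\mathbf{1}[S=[n]\setminus\{0\}]$:
\begin{itemize}
\item each $i\in S$ scores in exactly the $n-2$ types that do not rank it;
\item the only extra score is candidate $1$ in type $0$, and only when $S$ is everything.
\end{itemize}
Every type $t\in S$ straddles, since its first choice $t\in S$ is preferred to $0\notin S$, so the denominator is at least $|S|$. Every proper $S$ therefore gives ratio at most $\frac{n-2}{n}$, strictly below the full-set value $\frac{(n-1)(n-2)+1}{n(n-1)}$.

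For $k\ge 3$ your layer-cake plan fails structurally. The numerator is additive over the level sets $\{\ell>t\}$. For each voter, however, $\max(0,\max_{v(i)<v(j)}(\ell_i-\ell_j))$ is the length of the longest admissible interval $[\ell_j,\ell_i)$, while $\int_0^\infty$ of its $\{0,1\}$ version is the measure of the union of those intervals. Hence $D(\ell)\le\int_0^\infty D(\mathbf{1}_{\{\ell>t\}})\,dt$, strictly for example for a voter $(a,0,b,\dots)$ with $\ell_a<\min_{j\text{ unranked}}\ell_j<\ell_b$. Per-level bounds $N\le\rho D$ therefore integrate in the wrong direction. This is exactly why the paper says the $\{0,1\}$ reduction breaks for $k\ge3$, so no subset enumeration can suffice.

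Your two repairs are also unestablished:
\begin{itemize}
\item Fixing one pair per voter restores additivity, but it makes the per-level claim strictly stronger, and you have not verified it.
\item The fallback ``raise the smallest nonzero $\ell_i$'' monotonicity is only asserted.
\end{itemize}

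One correct reduction you could use is the following. $N$ is a sum of minima of linear forms and $D$ a sum of maxima, so $N-\rho D$ is concave. It therefore suffices to show that $\ell\equiv 1$ is a local maximizer, i.e.\ $N'(\mathbf{1};h)\le\rho\,D'(\mathbf{1};h)$ for all directions $h$. At $\ell\equiv1$ the only active denominator pieces for types other than $0$ are $\ell_i-\ell_0$ with $i$ ranked above $0$. This removes every difference term except type $0$'s, but that inequality still has to be proved.
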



\cref{prop:cyclic} is proven in \cref{a}. For $k = 1$ we recover $D^* = 3 - \frac{2}{n}$, and for $k \in \bc{n-1, n}$ we have $D^* = 2$, so even full rankings cannot push the distortion of the cyclic instance below $2$. We remark that \cref{prop:cyclic} strictly improves the lower bound $3 - \frac{2}{\lfloor \fr{n}{k} \rfloor}$ of \cite{gross2017} for all $k \ge 2$.

Note that $D^\star\paren*{n, k}$ is the exact distortion of the cyclic family, not a universal worst case: for $k \ge 2$ other profiles can be harder. For instance, at $n = 5$, $k = 2$, a search with \cref{thm:k-exact} finds instances of distortion $\approx 2.38 > D^\star\paren*{5, 2} = 2.3$. Hence, identifying the worst top-$k$ profile for $k \ge 2$ remains open.

\section{Conclusion}

In this paper, we derived an instance-specific expression for distortion that pins down the exact worst-case distortion of any randomized mechanism. We first used this expression in the first-choice setting to prove the unique $n$-optimality of $q^\frac{n}{n - 1}$ among the $q^p$ family. We also constructed a randomized mechanism that performs strictly better than any other mechanism in terms of the voter profile $\nu$. We then generalized our results to top-$k$ voting where we derived an exact expression for distortion. However, this expression must maximize over both the optimal candidate and a set of $n$ non-negative real distances $\ell_i$, making it more difficult to analyze than the specific $k = 1$ bound. When $k = 2$, restricting each $\ell_i$ to $\{0, 1\}$ recovers the same maximum, but the expression must still enumerate over subsets.

The main problem left open by our work is a closed-form characterization of the worst-case top-$k$ profile for $k \ge 3$, where the denominator of \cref{thm:k-exact} no longer reduces to a linear form and the $\{0,1\}$ edge reduction fails. Additionally, we have yet to prove the worst-case distortion as a function of $n$ when $k > 1$, but \cref{thm:k-exact} and \cref{thm:k2} should help with the analysis, and \cref{prop:cyclic} provides an improved lower bound.

\section*{Acknowledgements}

This work was supported by a NSF Graduate Research Fellowship. We thank Will Flowers and William Spencer for discussions about \cref{thm:n-opt}.

%
%
%
\newpage
\bibliographystyle{splncs04}
\bibliography{references}

\newpage

\appendix

\section{Omitted proofs} \label{a}

We prove \cref{lem:upper-bound}:
\begin{proof} Let $i$ be the candidate with optimal social cost. Since the mechanism elects candidate $j$ with probability $q_j$ and its expected cost is $\sum_j q_j C_d\paren*{j} = C_d\paren*{i} + \sum_j q_j \paren*{C_d\paren*{j} - C_d\paren*{i}}$, if we find the maximum additional social cost that can result from switching to each candidate $j$, denoted $\max \Delta_j$, as well as the minimum social cost of candidate $i$, $\min C_d\paren*{i}$, we can upper-bound the distortion of a mechanism $q$ by
\[
\Dist{q} \leq 1 + \frac{\sum_j q_j \max \Delta_j}{\min C_d\paren*{i}}
\]

To maximize $\Delta_j$, we should maximize $d\paren*{v, j} - d\paren*{v, i}$ for each voter, and thus we place each according to two cases:
\begin{itemize}
    \item If a voter $v$'s first choice is candidate $j$, then we must have $d\paren*{v, j} \leq d\paren*{v, i}$. Therefore, the maximum possible value of $d\paren*{v, j} - d\paren*{v, i}$ is 0.
    \item If a voter $v$'s first choice is any other candidate, our only upper bound is $d\paren*{v, j} - d\paren*{v, i} \leq d\paren*{i, j}$ by the triangle inequality.
\end{itemize}

Putting these together, we have that
\[
\max \Delta_j = \paren*{1 - \nu_j} \cdot d\paren*{i, j}
\]

To minimize $C_d\paren*{i}$, we should minimize $d\paren*{v, i}$ for each voter, and thus we again place each according to two cases:
\begin{itemize}
    \item If a voter $v$'s first choice is candidate $i$, we can place $v$ such that $d\paren*{v, i} = 0$.
    \item Otherwise, if $v$'s first choice is candidate $j$, it must hold that $d\paren*{v, j} \leq d\paren*{v, i}$ and also that $d\paren*{v, j} - d\paren*{v, i} \leq d\paren*{i, j}$ by the triangle inequality, and from these two bounds it follows that $d\paren*{v, i} \geq \fr{d\paren*{i, j}}{2}$.
\end{itemize}

Putting these together, we have that
\[
\min C_d\paren*{i} = \sum_{j} \nu_j \cdot \fr{d\paren*{i, j}}{2}
\]

Combining these bounds, we then have that
\[
\Dist{q} \leq 1 + \frac{\sum_{j} q_j \max \Delta_j}{\min C_d\paren*{i}}
\]
We now use the mediant inequality, which states that for any $n_i$ and $d_i > 0$ we have $\frac{\sum_{i} n_i}{\sum_{i} d_i} \leq \max_i \frac{n_i}{d_i}$.
Applying this to our distortion bound yields
\begin{align*}
\Dist{q} &\leq 1 + \frac{\sum_{j} q_j \max \Delta_j}{\min C_d\paren*{i}} \\ &= 1 + 2 \cdot \frac{\sum_{j} q_j \paren*{1 - \nu_j} \cdot d\paren*{i,j}}{\sum_{j} \nu_j \cdot d\paren*{i, j}} \\
&\leq 1 + 2 \cdot \max_j \frac{q_j \paren*{1 - \nu_j} \cdot d\paren*{i,j}}{\nu_j \cdot d\paren*{i, j}} && \text{(Mediant inequality)} \\
&= 1 + 2 \cdot \max_j \frac{q_j \paren*{1 - \nu_j}}{\nu_j}
\end{align*}
as desired. An intuitive way to interpret this result is that $\paren*{1 - \nu_j} \cdot d\paren*{i, j}$ in the numerator represents the maximum additional cost that can result from switching to candidate $j$, while $\fr12  \nu_j \cdot d\paren*{i, j}$ in the denominator represents the minimum contribution of the $\nu_j$ voters to optimal total cost. The $d\paren*{i, j}$ then cancel out, leaving just $2 \cdot \frac{1 - \nu_j}{\nu_j}$. \qed
\end{proof}

To prove \cref{thm:n-opt}, we need the following two lemmas:
\begin{lemma} \label{lem:numerator}
The expression $\nu_i^\frac{1}{n - 1} \cdot \paren*{1 - \nu_i}$ is maximized when $\nu_i = \frac{1}{n}$.
\end{lemma}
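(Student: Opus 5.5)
We treat this as a one-variable calculus problem on $[0,1]$. Set $f(x) = x^{\frac{1}{n-1}}(1-x)$ for $x \in [0,1]$, with $n \ge 2$ so the exponent $\alpha = \frac{1}{n-1} > 0$. First, $f$ is continuous on the compact interval, so a maximum exists. Next, $f(0) = 0$ and $f(1) = 0$, while $f(x) > 0$ on the open interval. So the maximum is attained at an interior critical point.

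To locate that critical point we differentiate on $(0,1)$. We get
\[
f'(x) = \alpha x^{\alpha - 1}(1-x) - x^{\alpha} = x^{\alpha-1}\paren*{\alpha(1-x) - x} = x^{\alpha - 1}\paren*{\alpha - (1+\alpha)x}.
\]
Since $x^{\alpha-1} > 0$ on $(0,1)$, the sign of $f'$ is the sign of the linear factor $\alpha - (1+\alpha)x$. This factor is positive for $x < \frac{\alpha}{1+\alpha}$ and negative afterwards. Hence $f$ is strictly increasing and then strictly decreasing, with its unique maximizer at $x = \frac{\alpha}{1+\alpha}$. Substituting $\alpha = \frac{1}{n-1}$ gives
\[
\frac{\alpha}{1+\alpha} = \frac{1/(n-1)}{n/(n-1)} = \frac{1}{n}.
\]
This proves the claim, and moreover the maximizer is unique. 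Uniqueness is what \cref{thm:n-opt} will likely need, in order to rule out the other exponents $p$.

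An equivalent route is to maximize $\log f = \alpha \log x + \log(1-x)$, which is strictly concave, so its unique stationary point is the global maximum. I expect no real obstacle. The only care points are the degenerate case $n=1$, which is excluded, and the behavior of $x^{\alpha-1}$ near $0$, which is why I handle the endpoints by direct evaluation rather than through the derivative. It may also be worth recording the maximum value $f(1/n) = n^{-\frac{1}{n-1}}\cdot\frac{n-1}{n}$ for later use in bounding $\max_i q_i\frac{1-\nu_i}{\nu_i}$ via \cref{cor:exact}.
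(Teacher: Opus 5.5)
Your proof is correct and takes a different route from the paper. The factorization $f'(x) = x^{\alpha-1}\bigl(\alpha - (1+\alpha)x\bigr)$ shows $f$ is strictly increasing on $(0,\frac{1}{n}]$ and strictly decreasing on $[\frac{1}{n},1)$. Together with continuity at the endpoints, this gives the unique maximizer.

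The paper avoids calculus. It raises the expression to the $(n-1)$-th power, which is monotone on $[0,\infty)$, to get the polynomial $v(1-v)^{n-1}$. It then applies AM--GM to the $n$ numbers $(n-1)v, 1-v, \dots, 1-v$, whose sum is the constant $n-1$. This yields $v(1-v)^{n-1} \le \frac{(n-1)^{n-1}}{n^n}$, with equality iff $(n-1)v = 1-v$.

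What each route buys:
\begin{itemize}
\item The AM--GM route gives the maximum value and the equality case in one line, with no care needed about $x^{\alpha-1}$ blowing up at $0$. But it requires choosing weights that make the sum constant, which essentially encodes the answer in advance. As written it also uses that $\alpha = \frac{1}{n-1}$ is the reciprocal of an integer; a general exponent would need weighted AM--GM.
\item Your derivative computation is mechanical and works verbatim for any $\alpha > 0$, locating the maximizer at $\frac{\alpha}{1+\alpha}$.
\end{itemize}

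One correction to your closing speculation. In the proof of \cref{thm:n-opt}, this lemma is used only in the sufficiency direction, where just the maximum value $n^{-\frac{1}{n-1}}\cdot\frac{n-1}{n}$ is needed. Other exponents $p$ are ruled out by a separate first-order perturbation of the uniform profile, not by uniqueness of the maximizer here.
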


\begin{proof}
Write $v$ for $\nu_i$. Since $x \mapsto x^{n - 1}$ is increasing on $[0, \infty)$, it is equivalent to maximize
$$
\paren*{v^\frac{1}{n - 1} \paren*{1 - v}}^{n - 1} = v \cdot \paren*{1 - v}^{n - 1}.
$$
By the AM-GM inequality applied to the $n$ nonnegative numbers $\paren*{n - 1} v$ and $n - 1$ copies of $\paren*{1 - v}$,
$$
\paren*{n - 1} v \cdot \paren*{1 - v}^{n - 1} \leq \paren*{\frac{\paren*{n - 1} v + \paren*{n - 1}\paren*{1 - v}}{n}}^{n} = \paren*{\frac{n - 1}{n}}^{n},
$$
with equality if and only if $\paren*{n - 1} v = 1 - v$, i.e.\ $v = \frac{1}{n}$. Hence $v \cdot \paren*{1 - v}^{n - 1}$, and therefore $v^\frac{1}{n - 1}\paren*{1 - v}$, is maximized at $v = \frac{1}{n}$, as desired. \qed
\end{proof}

\begin{lemma} \label{lem:denominator}
Under the constraint $\sum_{i = 1}^n \nu_i = 1$, the expression $\sum_{i = 1}^n \nu_i^\frac{n}{n - 1}$ is minimized when $\nu_i = \frac{1}{n}$ for all $i \in [1, n]$.
\end{lemma}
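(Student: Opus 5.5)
The plan is to apply Jensen's inequality to the convex function $f(x) = x^{p}$ with $p = \frac{n}{n-1} > 1$ on $[0,\infty)$. Convexity holds because $f''(x) = p(p-1)x^{p-2} \ge 0$ for $x > 0$, and $f$ is continuous at $0$. Jensen with uniform weights $\frac{1}{n}$ gives
\[
\frac{1}{n}\sum_{i=1}^n \nu_i^{p} \;\ge\; \paren*{\frac{1}{n}\sum_{i=1}^n \nu_i}^{p} = n^{-p},
\]
so that $\sum_i \nu_i^{p} \ge n^{1-p} = n \cdot n^{-p}$. This is exactly the value the sum takes at $\nu_i = \frac{1}{n}$, which proves that the uniform profile is a minimizer.

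Alternatively, one could use the power-mean inequality $M_p(\nu) \ge M_1(\nu) = \frac{1}{n}$, or Hölder's inequality
\[
\sum_i \nu_i \cdot 1 \le \paren*{\sum_i \nu_i^{p}}^{1/p} n^{1/p'}, \qquad p' = n,
\]
which rearranges to the same bound. I would present the Jensen argument as the main route.

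Uniqueness of the minimizer is presumably what \cref{thm:n-opt} needs, so I would note that $f$ is strictly convex for $p > 1$. Jensen's inequality is then an equality only when all the $\nu_i$ coincide, which forces $\nu_i = \frac{1}{n}$.

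There is one edge case to address. Some $\nu_i$ may equal $0$, and strict convexity on the closed interval $[0,\infty)$ still holds, since $x^p$ is strictly convex there for $p > 1$. So the argument goes through without modification.

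I do not expect any real obstacle here. The only care needed is to state strict convexity correctly, including at the boundary point $x = 0$, so that the equality case is justified. This lemma is then combined with \cref{lem:numerator} and \cref{cor:exact}. Specifically, $q^p_i (1-\nu_i)/\nu_i = \nu_i^{1/(n-1)}(1-\nu_i)/\sum_j \nu_j^{p}$, whose numerator is maximized and whose denominator is minimized at the uniform profile, giving distortion $3 - \frac{2}{n}$.
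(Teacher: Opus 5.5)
Your proof is correct and rests on the same idea as the paper's, namely (strict) convexity of $x^{n/(n-1)}$: the paper uses a smoothing step (replacing two unequal $\nu_i,\nu_j$ by their average strictly lowers the sum), while you apply Jensen's inequality directly. Your version is slightly cleaner, because it gives the global bound $\sum_i \nu_i^{n/(n-1)} \ge n^{-1/(n-1)}$ outright. The smoothing argument, by contrast, only rules out non-uniform minimizers and tacitly relies on a minimizer existing on the compact simplex.
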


\begin{proof}
Note that $f\paren*{\nu} = \nu^\frac{n}{n - 1}$ is convex in $\nu$. Therefore, if there exists $i, j$ such that $\nu_i \neq \nu_j$, $\sum_{i} f\paren*{\nu_i}$ strictly decreases if we replace both $\nu_i$ and $\nu_j$ with $\frac{\nu_i + \nu_j}{2}$. Hence, all $\nu_i$ must be equal, and $\sum_{i} \nu_i = 1 \implies \nu_i = \frac{1}{n}$ for all $i \in [1, n]$. \qed
\end{proof}

With these two lemmas, we can now prove \cref{thm:n-opt}:

\begin{proof}
By \cref{cor:exact}, the proportional-to-power mechanism $q^p$, for which $q_i = \frac{\nu_i^p}{\sum_j \nu_j^p}$, has distortion exactly
\[
\Dist{q^p} = 1 + 2 \cdot \frac{\max_i \paren*{\nu_i^{p - 1} \cdot \paren*{1 - \nu_i}}}{\sum_{j} \nu_j^p}.
\]

For the uniform profile $\nu_i = \frac{1}{n}$, every $q^p$ assigns $q_i = \frac{1}{n}$, so \cref{cor:exact} gives distortion exactly $1 + 2 \cdot \frac{1}{n} \cdot \paren*{n - 1} = 3 - \frac{2}{n}$, independent of $p$. Hence $\sup_\nu \Dist{q^p} \geq 3 - \frac{2}{n}$ always, and $q^p$ is $n$-optimal if and only if the uniform profile globally maximizes $\Dist{q^p}$.

To test this, consider the perturbed profile $\nu_1 = \frac{1}{n} \cdot \paren*{1 - \paren*{n - 1}\epsilon}$ and $\nu_2 = \cdots = \nu_n = \frac{1}{n} \cdot \paren*{1 + \epsilon}$, where $\epsilon \to 0$ may take either sign. Using $\paren*{1 + \epsilon}^p = 1 + p\epsilon + O\paren*{\epsilon^2}$, we have $q_2 = \frac{1}{n} \cdot \paren*{1 + p\epsilon}$. Keeping just the $i = 2$ term in the maximum of \cref{cor:exact},
\[
\Dist{q^p} \geq 1 + 2 \cdot q_2 \cdot \frac{1 - \nu_2}{\nu_2} = 1 + 2 \cdot \frac{\frac{1}{n} \cdot \paren*{1 + p\epsilon} \cdot \paren*{1 - \frac{1}{n} \cdot \paren*{1 + \epsilon}}}{\frac{1}{n} \cdot \paren*{1 + \epsilon}}.
\]
Rewriting $\frac{1}{1 + \epsilon}$ as $1 - \epsilon + O\paren*{\epsilon^2}$ and expanding, this lower bound becomes
\[
1 + \frac{2}{n} \cdot \paren*{n - 1 + [\paren*{n - 1} \cdot p - n] \cdot \epsilon} + O\paren*{\epsilon^2}
= 3 - \frac{2}{n} + \frac{2}{n} \cdot [\paren*{n - 1} \cdot p - n] \cdot \epsilon + O\paren*{\epsilon^2}.
\]
If $p \neq \frac{n}{n - 1}$, then $\paren*{n - 1} p - n \neq 0$, and choosing $\epsilon$ of the same sign makes this strictly exceed $3 - \frac{2}{n}$. The uniform profile is then not a global maximizer, so $q^p$ is not $n$-optimal. Hence $p = \frac{n}{n - 1}$ is necessary.

It remains to verify that $q^{\frac{n}{n - 1}}$ is indeed $n$-optimal. With $p = \frac{n}{n - 1}$ we have $p - 1 = \frac{1}{n - 1}$, so
\[
\Dist{q^{\frac{n}{n - 1}}} = 1 + 2 \cdot \frac{\max_i \paren*{\nu_i^{\frac{1}{n - 1}} \cdot \paren*{1 - \nu_i}}}{\sum_{j} \nu_j^{\frac{n}{n - 1}}}.
\]
By \cref{lem:numerator} the numerator is at most $\paren*{\frac{1}{n}}^{\frac{1}{n - 1}} \cdot \frac{n - 1}{n}$, and by \cref{lem:denominator} the denominator is at least $n \cdot \paren*{\frac{1}{n}}^{\frac{n}{n - 1}} = \paren*{\frac{1}{n}}^{\frac{1}{n - 1}}$, each with equality at the uniform profile. The ratio is therefore at most $\frac{n - 1}{n}$, giving $\Dist{q^{\frac{n}{n - 1}}} \leq 3 - \frac{2}{n}$ with equality at uniform. Thus $q^{\frac{n}{n - 1}}$ is $n$-optimal, and by the above it is the only such $q^p$. \qed
\end{proof}

We prove \cref{lem:hub}:
\begin{proof}
The function is symmetric, nonnegative, and zero on the diagonal, so it suffices to check the triangle inequality on every triple. Triples inside $\mathcal{C}$ hold by hypothesis, and the two displayed conditions are precisely the triangles with one point in $P$. For $\paren*{p, p', c}$ with $p, p' \in P$, the inequality $d\paren*{p, p'} \le d\paren*{p, c} + d\paren*{c, p'}$ is immediate from the definition; for the reverse orientation, let $c^*$ attain $d\paren*{p, p'}$, so
\[
\begin{aligned}
d\paren*{p, p'} + d\paren*{p', c} &= d\paren*{p, c^*} + d\paren*{c^*, p'} + d\paren*{p', c} \\
&\ge d\paren*{p, c^*} + d\paren*{c^*, c} \ge d\paren*{p, c},
\end{aligned}
\]
applying the first hypothesis to $p'$ and then the second to $p$. Finally, for $\paren*{p, p', p''}$, expanding both minima and applying the same two hypotheses to the intermediate candidates yields $d\paren*{p, p'} + d\paren*{p', p''} \ge d\paren*{p, p''}$. \qed
\end{proof}

We prove \cref{prop:cyclic}:
\begin{proof}
By the rotational symmetry of $C\paren*{n, k}$ the optimal mechanism is uniform, and we fix the optimal candidate $i^\star = 0$. The maximum in \cref{thm:k-exact} is attained at $\ell_i = 1$ for all $i \neq 0$ (the metric placing every candidate at distance $1$ from candidate $0$). At these edge lengths, $\min_{v\paren*{i} < v\paren*{j}} \ell_j = 0$ exactly when $v$ ranks $i$ above $0$ (the term $j = 0$, with $\ell_0 = 0$, enters), so $g_v\paren*{i} = 1$ if $v$ does \emph{not} rank $i$ above $0$ and $0$ otherwise; likewise $M_v = 1$ iff $v$ ranks some candidate above $0$, i.e.\ iff $v$'s first choice is not $0$.

\emph{Denominator.} Exactly one cyclic voter (type $0$) has first choice $0$, so $\sum_v M_v = n - 1$.

\emph{Numerator.} Since $q$ is uniform, $$\sum_i q_i \sum_v g_v\paren*{i} = \frac{1}{n} \sum_v \#\{ i \neq 0 : v \text{ does not rank } i \text{ above } 0 \}.$$ A voter that ranks $0$ in position $p$ ranks exactly $p$ candidates above $0$ and so contributes $n - 1 - p$; a voter that does not rank $0$ ranks all $k$ of its candidates above $0$ and contributes $n - 1 - k$. In $C\paren*{n, k}$ exactly one voter ranks $0$ in each position $p \in \{0, \dots, k-1\}$, and the remaining $n - k$ voters omit $0$, so
\begin{align*}
\sum_v \#\{i \neq 0 : v \text{ does not rank } i \text{ above } 0 \}
&= \sum_{p=0}^{k-1} \paren*{n - 1 - p} + \paren*{n - k}\paren*{n - 1 - k}
\\&= \frac{1}{2} \bp{ n\paren*{n-1} + \paren*{n-k}\paren*{n-k-1}}.
\end{align*}

Substituting into \cref{thm:k-exact}, we conclude that
\begin{align*}
D^\star &= 1 + 2 \cdot \frac{\frac{1}{n} \cdot \frac{1}{2}\left( n\paren*{n-1} + \paren*{n-k}\paren*{n-k-1} \right)}{n - 1} \\&= 2 + \frac{\paren*{n-k}\paren*{n-k-1}}{n\paren*{n-1}}.
\end{align*}
Finally, since $n\paren*{n-1} - \paren*{n-k}\paren*{n-k-1} = k\paren*{2n - k - 1}$, this equals $3 - \frac{k\paren*{2n - k - 1}}{n\paren*{n-1}} = 3 - 2 \cdot \frac{k\paren*{2n - k - 1}}{2n\paren*{n-1}}$, the form stated in \cref{prop:cyclic}.
\qed
\end{proof}

\end{document}